\newcommand{\bb}[1]{\boldsymbol{#1}}
\newcommand{\Ines}[1]{\textcolor{black}{#1}}
\newcommand{\Pat}[1]{\textcolor{black}{#1}}
\newtheorem{proposition}{Proposition}
\newcommand{\nicefrac}[2]{\frac{#1}{#2}}
\author{
  Puchhammer, Patricia\\
  \texttt{patricia.puchhammer@tuwien.ac.at}
  \and
  Wilms, Ines\\
  \texttt{i.wilms@maastrichtuniversity.nl}
  \and
  Filzmoser, Peter\\
  \texttt{peter.filzmoser@tuwien.ac.at}
}
\title{Sparse outlier-robust PCA for multi-source data}
\begin{document}
\maketitle
\begin{abstract}

Sparse and outlier-robust Principal Component Analysis (PCA) has been a very active field of research recently. Yet, most existing methods apply PCA to a single dataset whereas multi-source data--i.e.\ multiple related datasets requiring joint analysis--arise across many scientific areas. We introduce a novel PCA methodology that simultaneously (i) selects important features, (ii) allows for the detection of global sparse patterns across multiple data sources as well as local source-specific patterns, and (iii) is resistant to outliers. To this end, we develop a regularization problem with a penalty that accommodates global-local structured sparsity patterns, and where the ssMRCD estimator is used as plug-in to permit joint outlier-robust analysis across multiple data sources. We provide an efficient implementation of our proposal via the Alternating Direction Method of Multiplier and illustrate its practical advantages in simulation and in applications.
\end{abstract}

\section{Introduction} 


Principal component analysis (PCA) is undoubtedly one of the most important unsupervised statistical methods available. 
The basic idea is to project the observations in a given dataset onto a new vector space with orthonormal basis where each basis vector is a linear combination of the original variables constructed to capture the highest variability for the first basis vector, the second highest variability for the second basis vector and so on. The new variables are called \textit{Principal Components} (PC), the coordinates of the PCs in the original variable space are called \textit{loadings} and the coordinates of the observations with respect to the PCs are called \textit{scores}. Often, only the first few PCs that catch a majority of the variance and thus of the available information are analyzed. As such, PCA finds widespread application across numerous areas, such as dimensionality reduction, visualization, clustering, feature engineering and many more.

For standard PCA we get loadings that are often a combination of all variables involved. Especially nowadays with datasets consisting of many variables, sensible, efficient and correct interpretation of scores and loadings can get difficult. Moreover, by implicitly (or also explicitly) focusing the interpretation on large (absolute) loading entries and ignoring small ones, misleading interpretation results can be produced as discussed in \citet{Cadima1995}. Therefore, induced sparsity in the loading entries is necessary to ensure correct interpretation of PCA results. The literature on sparse PCA is rich. Starting with the work of \citet{Jolliffe2003}, who introduced the LASSO into PCA with the algorithm known as SCoTLASS, \citet{Zou2006} included an elastic-net penalty to PCA reformulated as regression problem. Further developments include the work of \citet{Shen2008} approaching the problem from a regularized singular value decomposition, \citet{Ma2013} focusing on a thresholding approach for high-dimensional data, \citet{Aspremont2008} deriving a greedy algorithm based on a semi-definite relaxation variation, and \citet{Journee2010} with a convex reformulation of sparse PCA.


While PCA or sparse PCA is often applied to datasets without further adjustments, many modern applications entail multiple related datasets \Pat{from different sources} for which PCA needs to be performed jointly. Classical examples where joint PCA is relevant are time series data that can be grouped based on time increments like days, months or years, spatial data with groups based on spatial proximity or nationality, or more general subgroups based on e.g.~demographics, socioeconomic status or other external variables. Even though PCA might still be applied globally on the whole data and structural changes might still be identified in the scores, the question of which variables drive the variance in different groups or datasets remains unanswered. However, fully localizing PCA by applying it on each subgroup individually ignores the overall inert link between the subgroups, rendering the individualistic approach inappropriate. Thus, the local and global aspect of the data needs to be addressed simultaneously. Moreover, in the multi-source PCA setting, sparsity and especially structured sparsity patterns are well suited. By analyzing multiple related datasets, we end up with $N$-times more loading entries than for a global PCA approach. Thus, sparsity in each entry is important. However, due to the interconnection of the datasets and followingly of the loadings, \Pat{structured sparsity, here meaning sparsity in entries of the same variable for all sources simultaneously, can be present in the datasets as well. Including a structured sparsity inducing combination of groupwise and elementwise penalties} can increase accuracy in PCA or also regression results as demonstrated by \citet{Jenatton2010} and \citet{Simon2013}, \Pat{respectively, for groupings of variables in a global context.} Although other disciplines have already explored multi-group aspects successfully, \Pat{for example \citet{Puchhammer2023} for robust and \citet{Danaher2014} and \citet{Price2021} for inverse sparse covariance estimation, \citet{Barbaglia2016} and \citet{Wilms2018} for time series data or \citet{Price2018} and \citet{Wang2013} in a multivariate regression context, it is still underexplored in the context of PCA.}


Especially when it comes to variability, outliers must be taken care of reliably. By definition, outliers do not behave like the majority of the data and lie outside of the multivariate point cloud of regular observations. Thus, outliers inherently increase variability measured by classical estimators and distort the direction of high variability towards them. Since we are interested in the direction of variability of the data majority, robustness in estimators for variability, i.e. covariance matrices, must be used. Well-known robust covariance estimators are for example the \textit{Minimum Covariance Determinant estimator} (MCD, \citet{Rousseeuw1985, Rousseeuw1999}) or its regularized variant, the \textit{Minimum Regularized Covariance Determinant estimator} (MRCD, \citet{Boudt2020}) that can be used to robustify PCA by a so-called \textit{plug-in} approach \citep{Croux2000}. Other approaches are based on projection pursuit \citep[e.g.][]{Li1985,  Hubert2002, Croux2005} or a combination of both called ROBPCA \citep{Hubert2005}. \citet{Hubert2016} further extend ROBPCA to sparse PCA (ROSPCA) while \citet{Croux2013} develop a robust PCA method with standard sparsity based on projection pursuit. However, no research has been done on robust PCA with structured sparsity patterns necessary in a setting with multiple groups
of observations.


We fill the gap in this paper with an intuitive extension of robust PCA to jointly analyze related datasets. Based on a robust covariance estimator that jointly calculates covariance matrices for multiple groups of data by induced smoothing,  called \textit{spatially smoothed MRCD} (ssMRCD) estimator \citep{Puchhammer2023, ssMRCD_Cran}, we apply the plug-in approach for robust PCA. Standard sparsity as well as structured sparsity penalties are included to mirror the relations between the \Pat{multiple sources (also called \textit{neighborhoods} in the ssMRCD context)}. By jointly analyzing the covariance matrix and sparsity in the loadings we can better differentiate between global structures indicated by similarities between sources, and local structures indicated by differences. 

In the remainder of the paper we will derive the objective function for sparse PCA in Section~\ref{sec:problem} and derive a numerical solution based on the Alternating Direction Method of Multipliers in Section~\ref{sec:algorithm}. In Section~\ref{sec:simulations} the algorithm is tested and the method is compared to ROSPCA and classical PCA. Two real data examples are analyzed in Section~\ref{sec:realdata} and finally, conclusions are given in Section~\ref{sec:conclusion}.




\section{Multi-source PCA based on the ssMRCD}
\label{sec:problem}

\Pat{In Section~\ref{subsec:firstPC} we introduce the minimization problem for the first PC. We expand the problem to further components in Section~\ref{subsec:multiplePC} and explain the plug-in approach with the ssMRCD estimator in Section~\ref{subsec:ssMRCD} in more detail}.

\subsection{First Principal Component}
\label{subsec:firstPC}

Let $\bb{X} = \left( \bb{x}_1', \ldots, \bb{x}_n' \right)' \in \mathbb{R}^{n \times p}$ be a data matrix of $n$ observations $\bb{x}_1, \ldots, \bb{x}_n$ with $p$ variables. \Pat{The observations are partitioned into~$N$ sources~$a_i$, $i = 1, \ldots, N$ with corresponding locally estimated covariance matrices denoted by~$\bb{\hat{\Sigma}}_i \in \mathbb{R}^{p \times p}$ and estimated means~$\bb{\hat{\mu}}_i \in \mathbb{R}^{p}$.}

When considering the loadings in a setting with multiple covariance matrices, each loading can be written as a matrix, where each column represents a source, and each row one variable. The loadings matrix of the $k$-th principal component is denoted as
\begin{align}
\label{eq:notation_matrix}
     \bb{V}^k = \left( 
    \begin{array}{ccc}
         v_{11}^k&  \ldots & v_{1N}^k\\
         \vdots &  & \vdots \\
         v_{p1}^k & \ldots & v_{pN}^k
    \end{array}
    \right) = (\bb{v}_{\cdot 1}^k, \ldots, \bb{v}_{\cdot N}^k) = (\bb{v}_{1\cdot}^{k'}, \ldots, \bb{v}_{p\cdot}^{k'})' \ .
\end{align}
The loadings matrix of the first PC is obtained by solving the following optimization problem
\begin{align}
\label{eq:objective_matrix}
   \bb{V}^1 = \underset{\substack{\bb{V} \in \mathbb{R}^{p \times N} \\ ||\bb{v}_{\cdot i}||_2 = 1, i = 1, ..., N}}{\text{argmin}} -\sum_{i=1}^N \bb{v}_{\cdot i}' \bb{\hat{\Sigma}}_i \bb{v}_{\cdot i} + \eta \gamma \sum_{j = 1}^{p} \underbrace{\sum_{i = 1}^{N} |v_{ji}|}_{ = ||\bb{v}_{j\cdot}||_1} + \eta (1-\gamma) \sqrt{N}  \sum_{j= 1}^p ||\bb{v}_{j \cdot} ||_2,
\end{align}
where $\eta \ge 0$ regularizes the degree of sparsity, and $\gamma \in [0,1]$ distributes the \Pat{sparsity between local ($\gamma = 1$) and global ($\gamma = 0$) sparsity patterns. The groupwise penalty induces global sparsity structures and is equivalent to the groupwise penalty used in \citet{Simon2013}}. As in \citet{Simon2013}, the term ~$\sqrt{N}$ balances the size of the two penalties since the minimal penalty for the L$_1$-norm under the given constraints is~$N$, whereas the minimal groupwise penalty is $\sqrt{N}$. Thus, we can compare the effect of increasing~$\eta$ among different levels of~$\gamma$ more easily. 

\subsection{Multiple Principal Components}
\label{subsec:multiplePC}

The loadings for the $k$-th principal component $\bb{V}^k$ are the solutions to the optimization problem Equation~\eqref{eq:objective_matrix} with an additional constraint to account for orthogonality per source,
\begin{align}
    \bb{v}_{\cdot i}^k \bot \bb{v}_{\cdot i}^l \hspace{0.5cm} l = 1, \ldots, k-1,  i = 1, \ldots, N.
\end{align}
The orthogonality constraints between the loadings per source constitute non-standard optimization constraints, especially in the context of (sparse) PCA. Since the groupwise sparsity induces a non-separable objective function and existing solutions rely on standard orthogonality constraints, they cannot be applied, and new solutions are needed. 

To facilitate notation and optimization, we can rewrite the problem into a stacked-column notation. The matrix~$\bb{V}^k$ can be stacked into one collective vector~$\bb{v}^k$ and the covariance matrices into a block-diagonal (positive semi-definite) matrix,
\begin{align*}
    \bb{v}^k= \left( \begin{array}{c}
         \bb{v}_{\cdot 1}^k \\
          \vdots  \\
         \bb{v}_{\cdot N}^k 
    \end{array}
    \right), \hspace{0.5cm}
    \bb{\hat{\Sigma}} = \left( 
    \begin{array}{ccc}
         \bb{\hat{\Sigma}}_1 & &   \\
         & \ddots  &\\ 
          & & \bb{\hat{\Sigma}}_N
    \end{array}
    \right).
\end{align*}
Then, the objective function for the loadings of the~$k$-th set of principal components can be rewritten in a form known from standard PCA with adapted penalty terms and linear and quadratic equality and inequality constraints,
\begin{align}
\label{eq:objective}
     \bb{v}^k = \underset{\substack{\bb{v} \in \mathbb{R}^{pN} }}{\text{argmin}} \hspace{0.5cm} &  -\bb{v}'\bb{\hat{\Sigma}} \bb{v} + \eta \gamma  ||\bb{v}||_1 + \eta (1-\gamma) \sqrt{N} \sum_{j= 1}^p \sqrt{\bb{v}' \bb{C}_j \bb{v}}   \notag \\
    \text{s.t.} \hspace{0.5cm} &  \bb{v}'\bb{B}_i\bb{v} = 1  \hspace{0.5cm} \forall i = 1, \ldots, N   \\
     & \bb{v}'\bb{B}_i\bb{v}^l = 0 \hspace{0.5cm} \forall l = 1, \ldots, k-1, \ i = 1, \ldots, N. \notag
\end{align}
The $pN \times pN$ matrices $\bb{C}_j$ and $\bb{B}_i$,  extract the $j$-th row (variable) and $i$-th column (source) of~$\bb{V}^k$ from the stacked column vector~$\bb{v}^k$, respectively, and are defined as 
\begin{align*}
    (\bb{C}_j)_{ik} &= \left\{
    \begin{array}{cl}
         1, & \text{if }  i = k = pl + j, \text{ where } l=0,\ldots ,N-1,  \\
         0, & \text{otherwise.}
    \end{array}
    \right. \\ 
    (\bb{B}_i)_{jk} &= \left\{
    \begin{array}{cl}
         1, & \text{if }  j = k = p(i-1)+l, \text{ where } l=1,\ldots ,p,\\
         0, & \text{otherwise.}
    \end{array}
    \right.
\end{align*} 
Thus, $\bb{C}_j  = \bb{C}_j' = \bb{C}_j' \bb{C}_j$ and $\bb{C}_1 + \ldots + \bb{C}_p = \bb{I}_{pN}$ for $j = 1, \ldots, p$ and $\bb{B}_i  = \bb{B}_i' = \bb{B}_i' \bb{B}_i$ and $\bb{B}_1 + \ldots + \bb{B}_p = \bb{I}_{pN}$ for $i = 1, \ldots, N$.

Once the loadings $\bb{v}^1, \ldots, \bb{v}^k$ are obtained from the data, the scores of each locally centered observation $\bb{x}_\iota- \bb{\hat{\mu}}_{a(\iota)}$ for $\iota = 1, \ldots, n$ in source $a(\iota)$ are calculated by 
\begin{align}
    \label{eq:scores}
    \bb{t}_\iota = (\bb{x}_\iota-\bb{ \hat{\mu}}_{a(\iota)}) \left(\bb{v}^1_{\cdot a(\iota)}, \ldots, \bb{v}^k_{\cdot a(\iota)} \right)
\end{align}
and collected in $\bb{T} = \left( \bb{t}_1', \ldots, \bb{t}_n' \right)' \in \mathbb{R}^{n \times k}$.

\subsection{Outlier-robustness via ssMRCD Plug-in}
\label{subsec:ssMRCD}

The optimization of problem Equation~\eqref{eq:objective} for the loadings requires plug-in estimators for the covariance matrix of each source, whereas computation of the scores in Equation~\eqref{eq:scores} additionally requires mean estimators. Standard choices to this end would be the sample covariance matrices and sample means computed for each source separately. Such estimators face, however, two problems. First, they are not robust to outliers. One may resort to traditional robust estimators such as the MCD or median for each source separately to circumvent this problem. Since only robustly estimated covariances and means are used further, no additional robustification steps are necessary. Second, these classical (non-robust or robust) estimators still treat each source in isolation thereby ignoring potential connections and interactions between them. Therefore, it is crucial to incorporate both local and global information to leverage available information across multiple sources more extensively, enhancing the accuracy and reliability of the resulting covariance and mean estimators.

\Pat{An outlier-robust covariance and mean estimator tailored towards this global-local scenario is the ssMRCD estimator \citep{Puchhammer2023} that is implemented in the R-package \texttt{ssMRCD} \citep{ssMRCD_Cran}. The ssMRCD estimator has originally been developed for local outlier detection in spatial data. However, due to the quite general requirements it can easily be applied to more general datasets within a multi-source setting.}

\Pat{Starting with a partition of the data into multiple sources, the ssMRCD estimator selects a subset $H_i$ consisting of an $\alpha \in [0.5,1]$ percentage of observations of each source $i$ by minimizing the objective function over all H-subset combinations $\mathcal{H} = (H_i)_{i=1,\ldots ,N}$ 
\begin{equation*}
	\mathcal{H}^* = \underset{\substack{\mathcal{H} = (H_i)_{i=1,\ldots ,N}}}{\text{argmin}} \sum_{i = 1}^{N}  \det\left( (1-\lambda) \boldsymbol{K}_i(\mathcal{H}) + \lambda \sum_{j = 1 , j \neq i}^{N} \omega_{ij} \boldsymbol{K}_j(\mathcal{H}) \right),
\end{equation*}
similar to the MCD \citep{Rousseeuw1985, Rousseeuw1999}, or the MRCD \citep{Boudt2020} estimator, thus choosing subsets with least-outlying observations. The weight matrix~$\bb{W}$ should provide a measure of similarity between data sources which is used to leverage global information more targeted. For example, for spatial or time series data, the weights could be based on inverse distances, or for groupings based on known properties, the similarity between these properties' levels might be an appropriate choice for $\bb{W}$ (see also Section~\ref{sec:realdata}). The matrices $\boldsymbol{K}_i(\mathcal{H})$ are constructed in an MRCD manner \citep{Boudt2020}, regularizing the sample covariance $Cov(\boldsymbol{X}_{H_i})$ matrix of the H-subsets of source $i$ with a target matrix $T$ and a factor $\rho_i$
\begin{align*}
    \boldsymbol{K}_i(\mathcal{H})= \rho_i \boldsymbol{T} + (1-\rho_i) c_{\alpha} Cov(\boldsymbol{X}_{H_i}),
\end{align*} 
making the estimator useful also in for high-dimensional data. The consistency factor $c_{\alpha}$ is described in \citet{Croux1999}. Finally, the ssMRCD covariance estimators are then defined as
\begin{align*}
    \hat{\boldsymbol{\Sigma}}_{i} = (1-\lambda) \boldsymbol{K}_i(\mathcal{H}^*) + \lambda \sum_{j = 1 , j \neq i}^{N} \omega_{ij} \boldsymbol{K}_j(\mathcal{H}^*),
\end{align*}
and the mean estimators $\hat{\boldsymbol{\mu}}_{i}$ as the sample mean of the selected observations $\boldsymbol{X}_{H^*_i}$ per source.}

The most prominent parameter for the ssMRCD estimator $\lambda \in [0,1]$ defines the amount of smoothing between the covariances of sources weighted with $\bb{W}$. Thus, the amount of smoothing $\lambda$ describes how much of the global context is exploited compared to the local source specific data.

\section{Algorithm}
\label{sec:algorithm}

We present a computationally efficient algorithm tailored towards solving optimization problem~\eqref{eq:objective}. To this end, we use the \textit{Alternating Direction Method of Multipliers (ADMM)} \citep{Boyd2011}. The ADMM requires finding a reasonable decomposition into subproblems and linear constraints, which is equivalent to the given problem, and solving these decompositions iteratively until convergence. The approach uses key ideas of the \textit{Augmented Lagrangian} methods, that add a penalty term of deviations of the linear constraints into the objective function with a penalty weight $\rho > 0$. This should improve other ascent methods, e.g., dual ascent methods, and fewer assumptions for the original objective function are necessary~\citep[Section 2-3]{Boyd2011}.

We start by rewriting problem~\eqref{eq:objective} for the $k$-th PC as a consensus optimization problem \citep[Chapter 7.1]{Boyd2011} over four variables $\bb{v}_{(0)}, \bb{v}_{(1)}, \bb{v}_{(2)}$ and $\bb{v}_{(3)}$ as
\begin{align*}
    \min_{\bb{v}_{(1)}, \bb{v}_{(2)}, \bb{v}_{(3)}, \bb{v}_{(0)}} \hspace{0.5cm} & \underbrace{- \bb{v}_{(1)}'\bb{\hat{\Sigma}} \bb{v}_{(1)} + I_\infty\{ \bb{v}_{(1)}'\bb{B}_i\bb{v}_{(1)} = 1, \ \bb{v}_{(1)}'\bb{B}_i\bb{v}^l = 0  \ \forall 1 \le i \le N, 1 \leq l < k \}}_{f_1(\bb{v}_{(1)})} \\
    & + \underbrace{\eta \gamma  ||\bb{v}_{(2)}||_1}_{f_2(\bb{v}_{(2)})} + \underbrace{\eta (1-\gamma) \sqrt{N}\sum_{j= 1}^p \sqrt{\bb{v}_{(3)}' \bb{C}_j \bb{v}_{(3)}}}_{f_3(\bb{v}_{(3)})} \\
    \text{s.t.} \hspace{0.5cm} & \bb{v}_{(i)} - \bb{v}_{(0)} = 0, \hspace{0.5cm} i = 1, 2, 3,
\end{align*}
where $I_\infty$ denotes the indicator function with an infinite amount of weight if the condition inside the brackets is not fulfilled. 

The ADMM updates for the $m$-th iteration step are 
\begin{align}
\label{eq:ADMMsteps}
    \bb{v}_{(i)}^{m+1} = & \arg\min_{\bb{v}_{(i)}} (f_i(\bb{v}_{(i)}) + \frac{\rho}{2}|| \frac{1}{\rho} {\bb{u}_{(i)}^m} + \bb{v}_{(i)} - \bb{v}_{(0)}^m ||^2_2 \\
    \bb{v}^{m+1}_{(0)} = &\frac{1}{3}\sum_{i = 1}^{3} \left(\bb{v}_{(i)}^{m+1}  + \frac{1}{\rho}\bb{u}_{(i)}^m \right), \notag \\
    \bb{u}_{(i)}^{m+1} =&\bb{u}_{(i)}^{m} + \rho\left(\bb{v}_{(i)}^{m+1} - \bb{v}_{(0)}^{m+1} \right), \notag
\end{align}
with $\rho > 0$. For $\gamma = 1$, minor adaptations (i.e., removing the minimization problem for $\bb{v}_{(3)}$ and adapting appropriately) are made to facilitate achieving full sparsity for $\eta \rightarrow\infty$. The solutions of the three new optimization problems \eqref{eq:ADMMsteps} are discussed in Appendix~\ref{sec:app_subproblems}.

\Pat{In the following we describe topics of convergence in  Section~\ref{subsec:convergence} such as convergence criteria, starting values, the choice of the parameter $\rho$ and further enhancements. We also provide guidance to select the multiple hyperparameters in Section~\ref{subsec:hyperparameters}.}

\subsection{Convergence}
\label{subsec:convergence}
A globally optimal solution to problem~\eqref{eq:objective} exists since we have a compact variable space and a continuous objective function. Convergence of the iterative ADMM is based on the residual approach described in \citet[Chapter 7]{Boyd2011}. The primal and dual residuals are calculated as 
\begin{align*}
    r^m &= ||\bb{v}_{(1)}^m - \bb{v}_{(0)}^m||_2^2 + ||\bb{v}_{(2)}^m - \bb{v}_{(0)}^m||_2^2 + ||\bb{v}_{(3)}^m - \bb{v}_{(0)}^m||_2^2 \\
    s^m &= 3\rho^2 ||\bb{v}_{(0)}^m - \bb{v}_{(0)}^{m-1}||_2^2
\end{align*}
and should converge to zero. For a given $\epsilon_{ADMM}$, set to $10^{-4}$ in all simulations and real data examples, the tolerances for the two convergence criteria are
\begin{align}
\label{eq:eps_primedual}
    \epsilon_{prime}^m &= \sqrt{Np} \ \epsilon_{ADMM} + \epsilon_{ADMM} \max\{||\bb{v}_{(1)}^m||_2, ||\bb{v}_{(2)}^m|_2, ||\bb{v}_{(3)}^m||_2, ||\bb{v}_{(0)}^m||_2 \} \\
    \epsilon_{dual}^m  &
    = \sqrt{Np} \ \epsilon_{ADMM} + \epsilon_{ADMM} \max \{ ||\bb{u}_{(1)}^m||_2, ||\bb{u}_{(2)}^m||_2, ||\bb{u}_{(3)}^m||_2  \},
\end{align}
implying the same absolute and relative tolerance \citep[see also][]{Boyd2011} and the algorithm stops when $r^m < \epsilon_{prime}^m$ and $s^m < \epsilon_{dual}^m$.

Since we are interested in sparse loadings that often cannot be exactly zero due to the iterative nature of the algorithm, in our calculations we will round loading entries whose absolute values are below a tolerance of $\epsilon_{thr} = 5 \cdot 10^{-3}$. While a fixed tolerance seems somehow arbitrary, it enables fair comparisons between different parameter settings and possibly differing algorithm accuracies. Note that due to rounding small values to zero the orthogonality constraints might be slightly violated.

Note that neither the objective function nor the feasible subspace defined by the constraints in Equation~\eqref{eq:objective} is convex. Hence, there is neither a guarantee of obtaining a globally optimal solution nor convergence at all. Moreover, for standard or for sparse PCA, two vectors that are directly opposed to each other lead to the same explained variance and the same sparsity penalty. 
The same applies to our objective function, but on a smaller source level. \Pat{This is often not worth to mention since it is not necessarily an algorithmic problem \citep[see for example the SPCA algorithm of][]{Zou2006}.} 
However, \Pat{during experiments of prior stages of the algorithm} the ADMM has not been observed to converge but to switch between vectors, most likely due to the change per iteration of the penalty terms coming from the Augmented Lagrangian. To solve the convergence problem, we impose an additional constraint to the optimality problem in Equations~\eqref{eq:objective} to impede switching. Taking a fixed vector~$\bb{z} \in \mathbb{R}^{pN}$, we enforce the solution for the $k$-th vectorized loadings matrix to have a non-negative scalar product with~$\bb{z}$ for each source, 
\begin{align}
\label{eq:z}
    \bb{z}'\bb{B}_i\bb{v}^k \ge 0, \hspace{0.5cm} \forall i = 1, \ldots, N,
\end{align}
and add the term~$I_\infty\{\bb{z}'\bb{B}_i\bb{v}_{(1)} \ge 0 , \ \forall 1 \le i \le N\}$ to the objective function~$f_1(\bb{v}_{(1)})$.
Unless~$\bb{z}$ is exactly orthogonal to the real solution in at least one source, we consistently choose the solution for~$\bb{v}_{(1)}$ in the same direction. The best choice for $\bb{z}$ would be the true solution. Since the true solution is unknown, the next best possibility is the well-chosen starting value that we derive in Section~\ref{subsec:starting-value}. Section~\ref{subsec:alg_rho} further zooms into the choice of the penalty parameter $\rho$ and Section~\ref{subsec:alg_enhancements} ends with further algorithmic enhancements.

\subsubsection{Starting value} 
\label{subsec:starting-value}
The starting value plays a crucial role in convergence, especially due to the non-convexity of problem \eqref{eq:objective}. The closer the starting value is to the real solution of our problem, the faster the convergence. Also, the projection approach with the vector $\bb{z}$ that is chosen to be the starting value is more stable for a good choice of the starting value since the orthogonality issues described above are less likely to occur. 

To find a good starting value, a compromise between the two extreme cases of sparsity is needed. Our sparse PCA problem with no (additional) sparsity, $\eta = 0$, becomes separable and is the standard PCA problem per covariance $\bb{\hat{\Sigma}}_i$. Thus, the solution for the $k$-th PC are the $k$-th eigenvectors $\bb{y}_k(\bb{\hat{\Sigma}}_i)$ of the covariances calculated per source $a_i$, 
\begin{align*}
    \bb{y}_k^0 = (\bb{y}_k(\bb{\hat{\Sigma}}_1)', \ldots, \bb{y}_k(\bb{\hat{\Sigma}}_N)')'.
\end{align*}

The extreme solution for~$\eta \rightarrow \infty$ depends on~$\gamma$, $\bb{y}_k^{\infty}(\gamma)$, and can be calculated \Pat{based on the results of Proposition~\ref{proposition:extremes}. }

\begin{proposition}
\label{proposition:extremes}
Using the notation of Equation~\eqref{eq:objective_matrix}, define $G_1(\bb{v}) = \sum_{j = 1}^p \sum_{i = 1}^N |v_{ji}| = \sum_{i = 1}^N ||\bb{v}_{\cdot i}||_1$ and $G_2(\bb{v}) = \sum_{j = 1}^p ||\bb{v}_{j \cdot}||_2$. 
\begin{itemize}
    \item[a.] For each source $i = 1, \ldots, N$ and given the normality constraint $||\bb{v}_{\cdot i}||_2 = 1$, the minimal value of $||\bb{v}_{\cdot i}||_1$ is attained, if there exists a variable $j'(i)$ such that $|v_{j'(i)i}| = 1$ and $v_{ji} = 0$ for all $j \neq j'(i)$. 
    \item[b.] Given the normality constraints $||\bb{v}_{\cdot i}||_2 = 1, i = 1, \ldots, N$, the minimal value of $G_2(\bb{v})$ is attained if there exists a variable~$j'$ for all sources $i = 1, \ldots, N$ such that $|v_{j'i}| = 1$ and $v_{ji} = 0$ for all~$j \neq j'$.
    \item[c.] The minimizers of $G_1(\bb{v})$ with the highest explained variance $\sum_{i = 1}^N \bb{v}_{\cdot i}' \bb{\hat{\Sigma}_i} \bb{v}_{\cdot i}$ have corresponding indices for non-zero entries $j'(i) \in \mathrm{arg} \max_{j = 1, \ldots, p}  \left(\bb{\hat{\Sigma}_i}\right)_{jj}$ for each source $i$. 
    The minimizers of $G_2(\bb{v})$ with highest explained variance have non-zero entries only for the variable indexed by~$j' \in \mathrm{arg} \max_{j = 1, \ldots, p} \sum_{i = 1}^N \left(\bb{\hat{\Sigma}_i}\right)_{jj}$.
\end{itemize}
\end{proposition}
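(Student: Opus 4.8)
The plan is to reduce all three parts to the elementary inequality $\|\bb{w}\|_1 \ge \|\bb{w}\|_2$, valid for every real vector $\bb{w}$, together with its equality characterization: equality holds if and only if $\bb{w}$ has at most one nonzero coordinate (this follows at once from expanding $\|\bb{w}\|_1^2 = \|\bb{w}\|_2^2 + 2\sum_{i<j}|w_i||w_j|$). Part~(a) is then immediate: applying the inequality to $\bb{w} = \bb{v}_{\cdot i}$ and using $\|\bb{v}_{\cdot i}\|_2 = 1$ yields $\|\bb{v}_{\cdot i}\|_1 \ge 1$, with equality precisely when $\bb{v}_{\cdot i}$ has a single nonzero entry, which must then have absolute value $1$; the choice $\bb{v}_{\cdot i} = \bb{e}_{j'(i)}$ shows the bound is attained and feasible.

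For part~(b), the key preliminary step is to note that $\sum_{j=1}^p \|\bb{v}_{j\cdot}\|_2^2$ is \emph{constant} on the feasible set: rearranging the double sum, $\sum_{j=1}^p \|\bb{v}_{j\cdot}\|_2^2 = \sum_{j=1}^p \sum_{i=1}^N v_{ji}^2 = \sum_{i=1}^N \|\bb{v}_{\cdot i}\|_2^2 = N$. I would then apply the same norm inequality to the nonnegative vector $\bb{a} = (\|\bb{v}_{1\cdot}\|_2, \ldots, \|\bb{v}_{p\cdot}\|_2) \in \mathbb{R}^p$, giving $G_2(\bb{v}) = \|\bb{a}\|_1 \ge \|\bb{a}\|_2 = \sqrt{N}$, with equality exactly when $\bb{a}$ has a single nonzero component, i.e.\ when there is an index $j'$ with $\bb{v}_{j\cdot} = \bb{0}$ for all $j \ne j'$; the per-source normality constraints then force $|v_{j'i}| = 1$ for every $i$. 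Feasibility of this configuration (all sources loading on the same variable $j'$) is again clear.

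For part~(c) I would use that parts~(a) and (b) identify the minimizer sets \emph{exactly}, not merely up to the inequality. The $G_1$-problem decouples over sources — both $G_1(\bb{v}) = \sum_{i} \|\bb{v}_{\cdot i}\|_1$ and the normality constraints act on each column $\bb{v}_{\cdot i}$ separately — so a vector minimizes $G_1$ if and only if each $\bb{v}_{\cdot i}$ equals $\pm \bb{e}_{j'(i)}$ for some index $j'(i)$. On any such vector the explained variance equals $\sum_{i=1}^N \bb{v}_{\cdot i}' \bb{\hat{\Sigma}}_i \bb{v}_{\cdot i} = \sum_{i=1}^N (\bb{\hat{\Sigma}}_i)_{j'(i)j'(i)}$, the sign being irrelevant since the form is quadratic, and this separable sum is maximized by choosing, independently for each source, $j'(i) \in \arg\max_{j=1,\ldots,p} (\bb{\hat{\Sigma}}_i)_{jj}$. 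For $G_2$, part~(b) characterizes the minimizers as the vectors with a common active variable $j'$ and $|v_{j'i}| = 1$ for all $i$; the explained variance of such a vector is $\sum_{i=1}^N (\bb{\hat{\Sigma}}_i)_{j'j'}$, which is maximized over the single choice $j' \in \arg\max_{j=1,\ldots,p} \sum_{i=1}^N (\bb{\hat{\Sigma}}_i)_{jj}$.

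The argument is essentially bookkeeping once the right inequality is invoked, so there is no serious obstacle; the one point that needs care is the invariance $\sum_{j} \|\bb{v}_{j\cdot}\|_2^2 = N$ on the feasible set in part~(b), because without a fixed $\ell_2$-budget the bound $\|\bb{a}\|_1 \ge \|\bb{a}\|_2$ would not pin down the minimum of $G_2$ — I would therefore state and use that identity explicitly before applying the norm inequality. I would also be deliberate in part~(c) about invoking the equality cases from (a) and (b) so that the subsequent maximization of explained variance is taken over the full minimizer set rather than a subset of it.
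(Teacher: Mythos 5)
Your proposal is correct and follows essentially the same route as the paper: part (a) via expanding $\|\cdot\|_1^2 \ge \|\cdot\|_2^2$ with the cross-term equality characterization, part (b) by applying the same inequality to the vector of row norms after noting $\sum_j \|\bb{v}_{j\cdot}\|_2^2 = N$, and part (c) by separable maximization of the diagonal entries over the minimizer sets (which the paper dismisses as ``trivial'' but you spell out correctly).
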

\begin{proof}
    The proof is given in Appendix~\ref{sec:app_startingvalues}.
\end{proof}

\Pat{Proposition~\ref{proposition:extremes} forms the basis for constructing the extreme solution set. Starting with the extreme solution of the first component, $\bb{y}_1^{\infty}(\gamma)$, when $\gamma = 1$, only $G_1$ is included in the penalty term, with penalty weight $\eta$ increasing without bounds. Thus, we focus solely on the minimizers of $G_1$ with the highest variance as an extreme solution. For $\gamma \neq 1$, we also need to minimize $G_2$. However, since the minimizers of $G_2$ are also minimizers of $G_1$, we ultimately seek minimizers of $G_2$ that explain most of the variance. }

\Pat{Secondly, the set of extreme solutions for subsequent PCs, is constructed iteratively. For the second and following components we need to make sure that the orthogonality constraints are maintained. Under the assumption that the prior PCs are extreme solutions as well with the previously described structure, every minimizer of the penalty term either per source for penalty $G_1$ or overall for penalty $G_2$ unequal to the prior extreme solutions satisfies the orthogonality constraints. Thus, we can focus on minimizers with the second highest, third highest, and so on, explained variance without further adjustments to the orthogonality constraints.}

\Pat{Note, that the indices in Proposition~\ref{proposition:extremes} are not necessarily unique, such as for correlation matrices, which are addressed in~Appendix~\ref{sec:app_startingvalues}.}

For a given~$\eta$ and~$\gamma$, we average the two extreme solutions to obtain an appropriate starting value, implicitly assuming some continuity of the solution in~$\eta$. Finally, we project the starting value onto the feasible subspace defined by the optimization constraints using the projection defined in Equation~\eqref{eq:projection}. Thus, the starting value for the $k$-th PC is 
\begin{align*}
\bb{y}_k = P_{\left(\bb{v}^{1:(k-1)}\right)^\bot}\left(\frac{1}{2} (\bb{y}_k^0 + \bb{y}_k^{\infty}(\gamma))\right).
\end{align*}

Note that the extreme solutions are based on orthogonality constraints using prior components, that are also derived by the extreme solution approach. Thus, the starting value/the extreme solutions are not applicable for just any vectors in the orthogonality constraints, but should be considered as a fixed set. \Pat{The good performance of the starting value is investigated in an additional simulation study in Appendix~\ref{subsec:app_startingvalues_simul}. }

good performance of starting value as investigated in additional simulation study in Appendix B1 or so

\subsubsection{Choice of penalty parameter $\rho$}
\label{subsec:alg_rho}
An open question connected to convergence is how to select the value of the penalty parameter $\rho$ as it vastly influences convergence and also convergence speed. Large values for $\rho$ focus on primal feasibility, thus keeping the primal residual small \citep{Boyd2011}. However, small primal residuals imply also smaller changes in the updates and possibly more iterations. The often-used approach to dynamically adapt $\rho$ based on the size of the residuals described in \citet{Boyd2011} does not perform well in our case, likely because the penalty for violating the constraints is reduced too heavily in some steps, leading to non-appropriate root finding starting values and followingly, no convergence to a feasible root.

Another aspect of the choice of $\rho$ that needs to be considered is \Pat{the numerical calculation of the first minimization problem of the ADMM stated in Equation~\eqref{eq:ADMMsteps} for $i = 1$. Again, it is a non-convex problem. However, the minimization subproblem is smooth and can be rephrased to a root finding problem by applying the Karush-Kuhn-Tucker-Theorem (KKT) (see Appendix \ref{subsec:app_PCAsubproblem} for more details). } The main \Pat{algorithmic} issue there is to use a good starting value \Pat{that leads to a feasible root fulfilling the inequality condition in the KKT-Theorem}. We use the \textit{warm start} approach to improve convergence speed \Pat{of the root finder}. By using the result of the prior iteration, we need fewer iterations per subproblem. By choosing $\rho$ on the larger side, our starting value~$\bb{v}_{(0)}^m$ for the root problem is also closer to the solution for~$\bb{v}_{(1)}^{m+1}$, and solving the problem becomes more stable. Specifically, we found that starting with a value of~$\eta + \frac{1}{2N}\sum_{i = 1}^N \sum_{j = 1}^p (\bb{\hat{\Sigma}}_i)_{jj}$ and increasing it sequentially if there is no convergence is working well regarding the root finding as well as the convergence of the ADMM itself. This is somehow encouraged by the findings of \citet{Ghadimi2014} regarding the optimal $\rho$ for quadratic problems, which depends on~$\eta$ and the eigenvalues of the matrix in the quadratic term. 

In order to adapt to the decreasing variability when calculating multiple components, the penalization parameter is adapted to the variance left. More specifically, this means that for calculating the $k$-th PC the penalty parameter $\rho_k$ is set to 
\begin{align*}
    \rho_k = \eta + \frac{1}{2N}\sum_{i = 1}^N \left(\sum_{j = 1}^p \left(\bb{\hat{\Sigma}}_i\right)_{jj} - \sum_{l = 1}^{k -1} (\bb{v}^l)'\bb{\hat{\Sigma}}_i\bb{v}^l\right).
\end{align*}

While high values of $\rho$ stabilize the root finding algorithm and can be used if convergence issues arise, we may need more ADMM iteration steps and convergence speed might decrease. Thus, if the algorithm does not converge within the given number of maximal steps, checking the trajectory of the residuals is helpful to differentiate between an increase of maximal steps and/or a decrease of $\rho$ if convergent behavior occurs in the residuals, or an increase in $\rho$ if non-convergent or erratic changes are observed. Again note that since we are confronted with a non-convex problem, there is no guarantee for convergence. However, we achieve convergence in all simulations and real data examples using the algorithmic fine tuning and parameter choices described above.

\subsubsection{Algorithmic enhancements}
\label{subsec:alg_enhancements}
To further improve performance and enhance computational speed, we implement several additional algorithmic enhancements.

Firstly, we apply \textit{Inexact Minimization / Early Termination} for the subproblem regarding the function $f_1(\bb{v}_{(1)})$, meaning that iterations of the root finder are stopped before full convergence during each step of the ADMM. Thus, fewer iterations per ADMM iteration are needed, leading to speed gains without significant loss of accuracy overall. The tolerance~$\epsilon_{root}$ indicates an error for finding the root of the function $f$ of $10^{-1}\epsilon_{root}|f|+10^{-1}\epsilon_{root}$ \citep[see function \texttt{multiroot} in package \texttt{rootSolve}, ][]{rootSolve} and we allow an increased error of~$10\epsilon_{root}$ in the constraints. In our calculations, $\epsilon_{root}$ is set to~$10^{-2}$ or~$10^{-1}$ for increased speed.

Secondly, the algorithm is considerably faster if we project~$\bb{v}_{(0)}^m$ after each ADMM iteration step onto the feasible subspace given by the optimization constraints in Equation~\eqref{eq:objective}. Denote the matrix containing all calculated loadings of source~$a_i$ as~$\bb{v}^{1:(k-1)}_{\cdot i} = (\bb{v}^1_{\cdot i}, \ldots, \bb{v}^{k-1}_{\cdot i})$. Then, the function projecting a vector~$\bb{v}=(\bb{v}_{\cdot 1}',\ldots ,\bb{v}_{\cdot N}')'$ to the feasible space for the $k$-th PC is defined as 
\begin{align}
\label{eq:projection}
    P_{\left(\bb{v}^{1:(k-1)}\right)^\bot}(\bb{v}) = \left(
     \begin{array}{c}
     \frac{\bb{v}_{\cdot 1} - \sum_{l=1}^{k-1} \langle \bb{v}_{ \cdot 1}, \bb{v}^{l}_{\cdot 1} \rangle \bb{v}^{l}_{\cdot 1} }{||\bb{v}_{\cdot 1}- \sum_{l=1}^{k-1} \langle \bb{v}_{ \cdot 1}, \bb{v}^{l}_{\cdot 1} \rangle \bb{v}^{l}_{\cdot 1} ||_2} \\
     \vdots \\
     \frac{\bb{v}_{\cdot N}- \sum_{l=1}^{k-1} \langle \bb{v}_{ \cdot N}, \bb{v}^{l}_{\cdot N} \rangle \bb{v}^{l}_{\cdot N}}{||\bb{v}_{\cdot N}- \sum_{l=1}^{k-1} \langle \bb{v}_{ \cdot N}, \bb{v}^{l}_{\cdot N} \rangle \bb{v}^{l}_{\cdot N}||_2}  
     \end{array}
     \right).
\end{align}
If $k = 1$, we project onto to orthogonal space of the null vector $\bb{0}$, thus we are normalizing only.

Finally, an overview of the ADMM algorithm for the $k$-th PC is summarized in Algorithm~\ref{alg:ADMM}.

\begin{algorithm}
\small
\caption{ADMM ($\bb{\hat{\Sigma}}_1, \ldots, \bb{\hat{\Sigma}}_N,\eta, \gamma, \bb{v}^{1:(k-1)}, \bb{y}_k, m_{max}, \epsilon_{ADMM}, \epsilon_{root}, \rho$)}
\label{alg:ADMM}
\begin{algorithmic}[1]
\State Initialize $\bb{v}_{(1)}, \bb{v}_{(2)}, \bb{v}_{(3)},  \bb{u}_{(1)}, \bb{u}_{(2)}, \bb{u}_{(3)} \leftarrow  \bb{0} \in \mathbb{R}^{Np}$, $\bb{v}_{(0)}^{new} \leftarrow \bb{y}_k$, $m \leftarrow 1$
\While{$m \le m_{max}$}
    \State $\bb{v}_{(0)}^{old} \leftarrow \bb{v}_{(0)}^{new}$
    \State Solve subproblems: \Comment{See also Appendix~\ref{sec:app_subproblems}}
    \State $\bb{v}_{(1)} \leftarrow$ solution of Equation~\eqref{eq:ADMMsteps} for $i = 1$ with  $\bb{u}_{(1)}$ and $\bb{v}_{(0)}^{new}$
    \State $\bb{v}_{(2)} \leftarrow  S(\bb{v}_{(0)}^{new} -\frac{1}{\rho}\bb{u}_{(2)}, \frac{\eta\gamma}{\rho})$
    \State $\bb{v}_{(3)} \leftarrow S_G(\bb{v}_{(0)}^{new} - \frac{1}{\rho}\bb{u}_{(3)}, \frac{\eta(1-\gamma) \sqrt{N}}{\rho})$
    \State $\bb{v}_{(0)}^{new} \leftarrow \frac{1}{3} \sum_{i = 1}^3 (\bb{v}_{(i)} + \frac{1}{\rho} \bb{u}_{(i)})$
    \State $\bb{v}_{(0)}^{new} \leftarrow P_{\left(\bb{v}^{1:(k-1)}\right)^\bot}(\bb{v}_{(0)}^{new})$
    \State $\bb{u}_{(i)} = \bb{u}_{(i)} + \rho(\bb{v}_{(i)}- \bb{v}_{(0)}^{new})$
    \State $r \leftarrow \sum_{i = 1}^3 ||\bb{v}_{(i)}-\bb{v}_{(0)}^{new}||_2^2$ (primal residual)
    \State $s \leftarrow 3\rho^2 ||\bb{v}_{(0)}^{new} -\bb{v}_{(0)}^{old}||_2^2 $ (dual residual)
    \State Set $\epsilon_{prime}, \epsilon_{dual}$ according to Equation~\eqref{eq:eps_primedual}
    \If{$r < \epsilon_{prime} \text{ and } s < \epsilon_{dual}$} 
        \State \textbf{break}
    \EndIf
\EndWhile
\State $\bb{v}_{(0)}^{new} \leftarrow P_{\left(\bb{v}^{1:(k-1)}\right)^\bot}(\bb{v}_{(0)}^{new})$
\State Set entries of $\bb{v}_{(0)}^{new}$ with absolute value lower than $\epsilon_{thr} = 0.005$ to 0
\State Normalize $\bb{v}_{(0)}^{new} \leftarrow P_{\left( \bb{0} \right)^\bot}(\bb{v}_{(0)}^{new})$
\State Return $\bb{v}_{(0)}^{new} $
\end{algorithmic}
\end{algorithm}

\subsection{Hyperparameter Selection} 
\label{subsec:hyperparameters}
In Section~\ref{subsec:params_sparsity} we provide criteria to select the sparsity hyperparameters, in Section~\ref{subsec:params_ssMRCD} possible parameter settings of the ssMRCD estimator are discussed and Section~\ref{subsec:Ncomponents} provides guidance on how to determine the number of components.

\subsubsection{Tuning Parameters for (Joint) Sparsity}
\label{subsec:params_sparsity}
The tuning parameters of the objective function \eqref{eq:objective}, i.e.~the parameter for \Pat{the amount of structure in sparsity induced by the distribution of the overall penalty across entrywise and groupwise penalty terms}~$\gamma$, and the parameter for the overall amount of sparsity~$\eta$, need to be selected. Regarding the  \Pat{amount of structured/global sparsity}, \citet{Simon2013}, who introduce the mixture of the two penalty types used in our sparse PCA setting, suggest to use either $\gamma = 0.05$ or $\gamma = 0.95$, depending on the expected amount of global sparsity compared to local sparsity. In \citet{Rao2016}, a similar penalty is used for classification with hyperparameter tuning via cross-validation. 

However, in an unsupervised setting like PCA, cross-validation is not feasible, and we will resort to the following optimality criteria for the first component only due to computational efficiency. Typically in sparse PCA, a good combination of sparsity (amount of zeros in the loadings) and explained variance is the goal. In contrast to standard sparse PCA, we are also looking to reduce the number of variables by setting loadings of groups to zero. Thus, we want to maximize the explained variance $\mathcal{V}(\bb{v}) = \bb{v}'\bb{\hat{\Sigma}}\bb{v}$ and the mean of the standardized entry- and groupwise sparsity
\begin{align*}
    \mathcal{S}(\bb{v}) = \frac{1}{2} \left( \frac{\#\{v_{ji} = 0, i = 1, \ldots, N, j = 1, \ldots, p \}}{N(p-1)} + \frac{\# \{ || \bb{v}_{j\cdot}||_2 = 0, j = 1, \ldots, p\}}{p-1} \right).
\end{align*}
Since the two variables vary also over~$\eta$, the optimal~$\gamma$ is chosen to be the maximizer of the area under the curve (AUC) of sparsity~$\mathcal{S}(\bb{v})$ and the explained variance, standardized to the two extreme solutions
\begin{align}
\label{eq:scaledvar}
    \frac{\mathcal{V}(\bb{v}) - \mathcal{V}(\bb{y}^\infty_1)}{\mathcal{V}(\bb{y}^0_1) - \mathcal{V}(\bb{y}_1^\infty)},
\end{align} 
along the trajectory path for varying~$\eta$, stopping at full sparsity. 

The second sparsity parameter~$\eta$ adjusts the number of sparse entries overall. \citet{Hubert2016} and \citet{Croux2013} both use BIC-based approaches to provide a tuning rule. While this is generally possible in our setting, it becomes more complicated with multiple covariances and the additional groupwise sparsity penalty regarding degrees of freedom. Instead, we prefer to use the simpler approach of optimizing the trade-off product (TPO) of the standardized number of zero entries and the standardized explained variance \eqref{eq:scaledvar} for the first principal component,
\begin{align*}
    TPO = \left(\frac{\#\{v_{ji} = 0, i = 1, \ldots, N, j = 1, \ldots, p \}}{N(p-1)}\right) \left(\frac{\mathcal{V}(\bb{v}) - \mathcal{V}(\bb{y}^\infty_1)}{\mathcal{V}(\bb{y}^0_1) - \mathcal{V}(\bb{y}_1^\infty)} \right).
\end{align*}

For further PCs, and similar to \citet{Croux2013}, the optimal~$\eta$ is adjusted according to the residual variance to distribute sparsity more equally across the loadings. However, while \citet{Croux2013} rely on the column-wise variation of the data projected on the orthogonal space of the prior components, the covariance matrices can be exploited. For a given~$\eta$ the degree of sparsity for the $l$-th PC is~$\eta_l = g_l \eta$. To calculate~$g_l$, we use the projected covariance matrix~$\bb{\hat{\Sigma}}_i$ of the orthogonal space of the corresponding first~$l-1$ PCs, per source. Instead of a sum of (robust) univariate variability measures in the orthogonal space of~$(\bb{v}^{1:(l-1)}_{\cdot i})$, we can directly calculate the amount of multivariate variability possible to reach in the orthogonal space due to the covariance matrices given. Thus, we propose to use the sum of the first eigenvalues,~$\lambda_1$, of the projected covariance matrices as scaling factor, 
\begin{align*}
    \tilde{g_l} = \sum_{i = 1} ^N \lambda_1\left(\left( \bb{I}_p - \bb{v}^{1:(l-1)}_{\cdot i} \left(\bb{v}^{1:(l-1)}_{\cdot i} \right)'\right)
    \bb{\hat{\Sigma}}_i
    \left( \bb{I}_p - \bb{v}^{1:(l-1)}_{\cdot i} \left(\bb{v}^{1:(l-1)}_{\cdot i} \right)'\right)\right),
\end{align*}
with~$\bb{I}_p$ being the~$p$-dimensional identity matrix. Finally, the standardized value~$g_l = \tilde{g_l}/\tilde{g_1}$ is used for scaling.

\subsubsection{Hyperparameters for the ssMRCD}
\label{subsec:params_ssMRCD}

The ssMRCD plug-in estimator requires additional hyperparameters that need to be set \citep[see][]{Puchhammer2023, ssMRCD_Cran}. Since the partition into multiple sources and the weights $\bb{W}$ between them are very data dependent, there is not a general rule how to set them, except the notions that are elaborated on in Section~\ref{subsec:ssMRCD}. For the smoothing parameter $\lambda$ there is a possible selection criterion described in \citet{Puchhammer2023}. However, since it is based on local outlier detection this is not applicable in the more general case, discussed here. 

Therefore, we derive a new approach to set $\lambda$ in a more general setting based on the idea that data should be described as well as possible by the means and covariances that are produced by the ssMRCD model. The model residuals $\bb{r}_\iota$, $\iota = 1, \ldots, n$, are
\begin{align*}
    \bb{r}_\iota = \bb{\hat{\Sigma}}_{a(\iota)}^{-1/2} (\bb{x}_\iota - \bb{\hat{\mu}}_{a(\iota)}) 
\end{align*}
where $a(\iota)$ is the source index corresponding to observation $\bb{x}_\iota$ and if we have a good estimation of the data, the mean of the smallest $\alpha$-fraction of residual norms
\begin{align}
\label{eq:optimalsmoothing}
    R = \frac{1}{h}\sum_{\iota = 1}^h||\bb{r}_{(\iota)}||_2
\end{align}
should be small. Hence, minimizing $R$ will be the criterion for the optimal~$\lambda$. If the partition cannot be derived from the data context, the same approach can be used to find a good grouping or even good weights, although computationally very expensive. 

\subsubsection{Number of Principal Components}
\label{subsec:Ncomponents}

Finally, the number of principal components necessary to describe the data appropriately needs to be selected. While in classical PCA the scree plot is often used, we propose to use the cumulative percent variation (CPV) as suggested in \citet{Hubert2016}. Thus, the number of components should at least cover $80\%$ of the overall (global) variation,
\begin{align}
\label{eq:CPV}
    \frac{\sum_{l = 1}^k (\bb{v}^l)' \bb{\hat{\Sigma}} \bb{v}^l}{\text{trace}(\bb{\hat{\Sigma}})} \ge 80 \%.
\end{align}
Since \citet{Hubert2016} calculate the CPV after the first robustification steps and before sparsity, the final CPV of the sparse loadings might be less than $80\%$. However, our approach uses iterative PC calculations and the availability of already robustly calculated covariance matrices. Thus, the obtained sparse loadings are ensured to explain at least $80\%$ of the variation. Depending on the research question, other summary statistics connected to CPV on a source level are applicable as well, e.g., the minimal CPV over all sources should be at least $80\%$, or an adapted scree plot consisting of boxplots can be used (see also Section~\ref{sec:realdata}).

\section{Simulations}
\label{sec:simulations}

\Pat{We introduce two simulation setups. In Section~\ref{subsec:simulationscovs} the algorithm is tested regarding the induced sparsity pattern and in Section~\ref{subsec:simulationscovs} the method and its estimated loadings are examined when applied to contaminated data.}

\begin{figure}
    \centering
    \includegraphics[width = \textwidth]{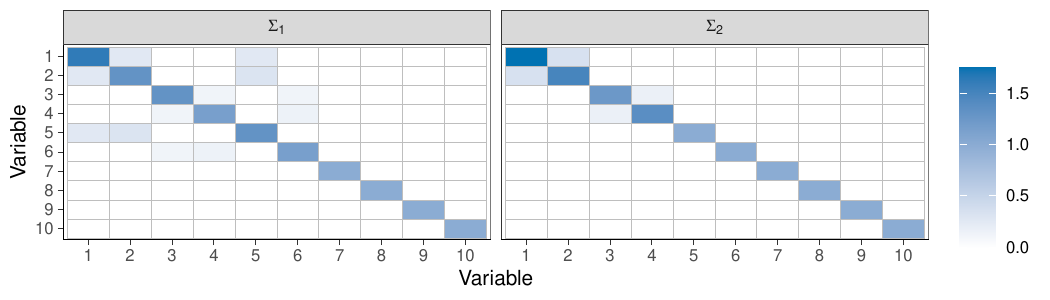}
    \caption{Heat map of the two covariance matrices used as basis for all simulation settings for $p = 10$. Each covariance entry is colored according to its value.}
    \label{fig:simulsetting}
\end{figure}

\subsection{Detecting Sparsity Patterns}
\label{subsec:simulationscovs}
The first simulation setup \textit{perturbed covariance matrices} is used to identify if the algorithm works well and if the objective function results in entry- and groupwise sparse loadings as suggested. Two covariance matrices for two groups are constructed using different sparse loading matrices $\bb{P}_1$ and $\bb{P}_2$, similar to the simulation setting of \citet{Croux2013},
\begin{align}
\label{eq:realloadings}
    \bb{P}_1 = 
    \setlength{\arraycolsep}{1pt} 
        \begin{bmatrix}
        \sqrt{\nicefrac{1}{2}} & 0 & -\sqrt{\nicefrac{1}{2}} & 0 & 0 & 0 &  \\
        \sqrt{\nicefrac{1}{4}} & 0 & \sqrt{\nicefrac{1}{4}} & 0 & -\sqrt{\nicefrac{1}{2}} & 0 &  \\
        0 & \sqrt{\nicefrac{1}{2}} & 0 & -\sqrt{\nicefrac{1}{2}} & 0 & 0 &  \\
        0 & \sqrt{\nicefrac{1}{4}} & 0 &\sqrt{\nicefrac{1}{4}} & 0 & -\sqrt{\nicefrac{1}{2}} &  \bb{0}_{6 \times (p-6)} \\
        \sqrt{\nicefrac{1}{4}} & 0 & \sqrt{\nicefrac{1}{4}} & 0 & \sqrt{\nicefrac{1}{2}} & 0 &   \\
        0 & \sqrt{\nicefrac{1}{4}} & 0 & \sqrt{\nicefrac{1}{4}} & 0 & \sqrt{\nicefrac{1}{2}} &   \\
          \multicolumn{6}{c}{\bb{0}_{(p-6) \times 6}}  &  \bb{I}_{p-6} 
        \end{bmatrix}, \
    \bb{P}_2 = 
    \setlength{\arraycolsep}{1pt} 
    \begin{bmatrix}
        \sqrt{\nicefrac{2}{3}} & 0 & -\sqrt{\nicefrac{1}{3}} & 0  &  \\
        \sqrt{\nicefrac{1}{3}} & 0 & \sqrt{\nicefrac{2}{3}} & 0 &  \\
        0 & \sqrt{\nicefrac{1}{3}} & 0 & -\sqrt{\nicefrac{2}{3}}  &  \bb{0}_{4 \times (p-4)} \\
        0 & \sqrt{\nicefrac{2}{3}} & 0 & \sqrt{\nicefrac{1}{3}}  &  \\
          \multicolumn{4}{c}{\bb{0}_{(p-4) \times 4}} &  \bb{I}_{p-4}
        \end{bmatrix},
\end{align}
and a common eigenvalue diagonal matrix,~$\bb{D} = \mbox{Diag}(2, 1.5, 1.25, 1.125, 1, \ldots, 1)$. 
The covariance matrices are then constructed based on the eigen-decomposition for each source as~$\bb{\Sigma}_i = \bb{P}_i \bb{D} \bb{P}_i'$, for~$i=1,2$ \Pat{and visualized in Figure~\ref{fig:simulsetting}}. Random noise~$\epsilon \sim \mathcal{N}(0, 0.1)$ is (symmetrically) added for each entry of the covariance matrices for each simulation run \Pat{individually to address possible uncertainty in the covariance estimation of $\bb{\hat{\Sigma}}_1$ and $\bb{\hat{\Sigma}}_2$}. While the first loadings are directionally similar, the second loadings imply opposing directions of highest variance between different sources to cover also the scenario of non-compliant dominating groups. We consider 100 repetitions for~$p = 10, \epsilon_{root} = 0.1, \eta = 0, 0.05, \ldots, 1.25, \gamma = 0, 0.5, 1$, and the first two principal components. 

\Pat{The resulting loading entries for varying~$\eta$ and~$\gamma$ are visualized in Figure~\ref{fig:entries1} for the first component and in Figure~\ref{fig:entries2} for the second component. Colored and non-solid lines indicate the variables whose loading entries should not be zero, and the corresponding horizontal lines the respective values, according to the true loadings in Equation~\eqref{eq:realloadings}. The gray variables are loading entries that should be sparse, and the shaded area around each loading entry indicates the standard error. We can clearly see that structured sparsity patterns are recovered successfully,  especially in the more complex second component that also show more deviation over the 100 simulations. While the loadings for the first PC are not indicating a significant effect of $\gamma$, in the second PC we see large differences in the variables 3 and 4, that are not sparse loading entries in both sources. By increasing~$\gamma$ they are kept at more accurately high levels for a larger range of~$\eta$ until the rise of the gray solid lines around~$\eta \ge 0.8$ indicates a trickling down of the variability of the real first PC, that is not accounted for in the estimated fully sparse first PC for high $\eta$. Altogether we can show the possibility of providing better estimates with structured sparsity patterns in appropriate settings.}

\begin{figure}
    \centering
    \includegraphics[width = \textwidth, trim={0cm 0.25cm 0 0.25cm}, clip]{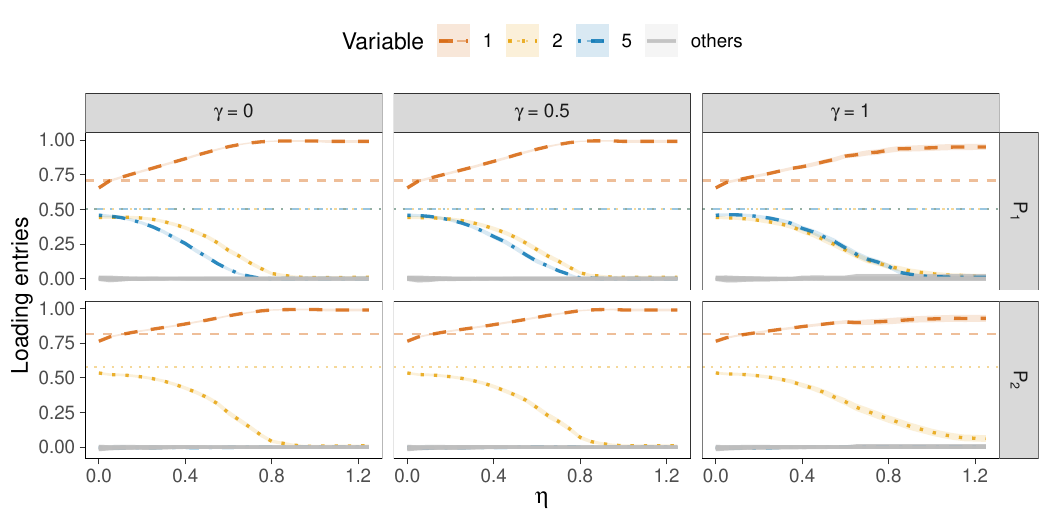}
    \caption{The mean of loading entries of the first PC with a band for the standard error. The \Pat{seven} solid gray lines depict the entries that are sparse by construction, the dashed and dotted lines the variables that are not sparse in at least one source loading. \Pat{The horizontal lines indicate the true value of the corresponding loading entries.}}
    \label{fig:entries1}
\end{figure}

\begin{figure}
    \centering
    \includegraphics[width = \textwidth, trim={0cm 0.25cm 0 0.25cm}, clip]{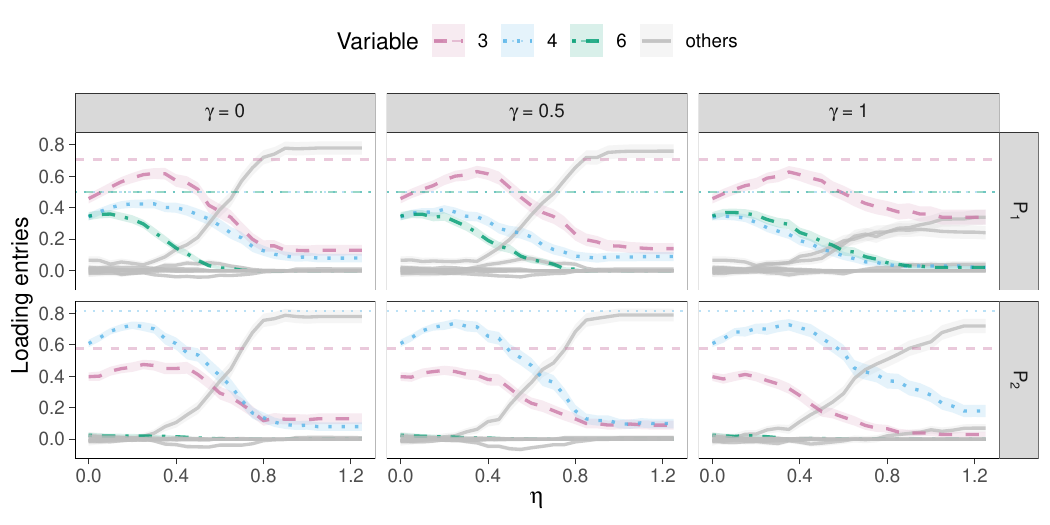}
    \caption{The mean of loading entries of the second PC with a band for the standard error. The \Pat{seven} solid gray lines depict the entries that are sparse by construction, the dashed and dotted lines the variables that are not sparse in at least one source loading. \Pat{The horizontal lines indicate the true value of the corresponding loading entries.}}
    \label{fig:entries2}
\end{figure}

\Pat{Further, we refer to the Appendix \ref{subsec:app_startingvalues_simul} for an analysis on the suitability and efficiency of the proposed starting values.}

\subsection{Outlier Robustness}
\label{subsec:simulationsdata}

The second simulation setting is used to compare different existing PCA approaches with the proposed combination of the ssMRCD estimator and groupwise sparse PCA. For a sensible application scenario of the ssMRCD estimator, we construct linearly shifting covariance matrices by using a convex combination of the two covariance matrices $\bb{\Sigma}_1$ and $\bb{\Sigma}_2$ of the prior simulation setting. With $N \ge 2$ being the number of covariances, the covariance $\tilde{\bb{\Sigma}}_i$ for source $a_i$ for $i = 1, \ldots, N$ is constructed according to
\begin{align*}
    \tilde{\bb{\Sigma}}_i =   (1-\frac{i-1}{N-1}) \bb{\Sigma}_1 + \frac{i-1}{N-1}\bb{\Sigma}_2.
\end{align*}
The corresponding real loadings are then just the eigenvectors of $\tilde{\bb{\Sigma}}_i$.

Similar to the simulation setting of \citet{Croux2013}, clean data points for each source~$a_i$ are drawn from a multivariate normal distribution~$\mathcal{N} (\bb{0}, \tilde{\bb{\Sigma}}_i)$ and a certain~$\epsilon_{out}$ fraction of shift outliers are drawn from~$\mathcal{N} (\bb{\mu}_{out}, \bb{I}_p)$ with 
\begin{align*}
    \bb{\mu}_{out} = \sqrt{2}(2,4,2,4,0,-1,1,0, 1, -1, \ldots, 0,1,-1)'
\end{align*}
per source. 

For the ssMRCD-based methods, the covariance estimation uses \Pat{either~$\alpha = 0.5$ for the robust or $\alpha = 1$ for the non-robust setting}, a band matrix, constructed with ones on the off-diagonals, zeros in the diagonal and appropriately scaled, as weight matrix~$\bb{W}$, and \Pat{either $\lambda = 0$ for the non-smoothed ssMRCD setting or the optimal smoothing criteria from Equation~\eqref{eq:optimalsmoothing} for the selection of~$\lambda$ for the other variants}. \Pat{For variants with sparsity, }the sparsity parameters are optimized using the optimization approaches described in Section~\ref{subsec:hyperparameters} on a grid of~$\gamma = 0, 0.1, \ldots, 1$ and~$\eta =0, 0.1, \ldots, 5$, or stopped prior if full sparsity is achieved. For the ROSPCA method introduced by \citet{Hubert2016}, the R-package \texttt{rospca} \citep{rospca} and the implemented optimal sparsity approach is used and applied for each source individually. 

The methods are compared using multiple evaluation criteria that are averaged over repetitions and sources. Firstly, the angle between the real subspace and the estimated subspace spanned by the first (for $k = 1$) or first and second (for $k = 2$) real and estimated loading, respectively, is calculated according to \citet{Hubert2016, Hubert2005} and standardized to $[0, 1]$. Secondly, the orthogonal distance of a non-contaminated observation $\bb{x}_\iota$,
\begin{align*}
    OD_\iota = ||\bb{x}_\iota - \bb{\hat{\mu}}_{a(\iota)} - \left(\bb{v}_{\cdot a(\iota)}^1, \ldots, \bb{v}_{\cdot a(\iota)}^k \right) \bb{t}_\iota ||_2,
\end{align*}
is given as measure for good data projection and reconstruction abilities of the components, linearly scaled to $[0,1]$ for illustration purposes. \Pat{As reference we also provide the amount of chosen sparsity by each method
\begin{align*}
    \frac{\#\{v_{ji} = 0, i = 1, \ldots, N, j = 1, \ldots, p \}}{N(p-1)}
\end{align*}}

From a sparsity recognition point of view, standard evaluation techniques can be analyzed to check if methods correctly specify sparse and non-sparse variables. The \textit{True Negative Rate} (TNR) specifies the percentage of correctly identified non-sparse entries of the loadings
and the \textit{True Positive Rate} (TPR) the correctly found sparse entries,
\begin{align*}
    TNR = \frac{1}{N} \sum_{i = 1}^N  \frac{\#\{j \in {1, \ldots, p}: \hat{v}_{ji} \neq 0, v_{ji} \neq 0\}}{\#\{j \in {1, \ldots, p}: v_{ji} \neq 0\}}, \\
    TPR = \frac{1}{N} \sum_{i = 1}^N  \frac{\#\{j \in {1, \ldots, p}: \hat{v}_{ji} = 0, v_{ji} = 0\}}{\#\{j \in {1, \ldots, p}: v_{ji} = 0\}}.
\end{align*}

We also include three evaluation measures to combine the two measures into one. The well known F1-Score and the zero-measure (Z-measure) introduced in \citet{Hubert2016}, which calculates the percentage of overall correctly identified entries without partitioning into groups first, are applicable in a balanced setting. However, in an unbalanced setting with high sparsity, F1 and the Z-measure essentially lead to ignorance of correctly identified non-sparse entries and the amount of correctly identified sparse entries drives a "good" performance and gives incentives to overestimate the sparsity pattern. Thus, we prefer to use the geometric mean (G-Mean) of TNR and TPR. 

\begin{figure}
    \centering
    \includegraphics[width = \textwidth, trim={0cm 0.65cm 0 0.3cm}, clip]{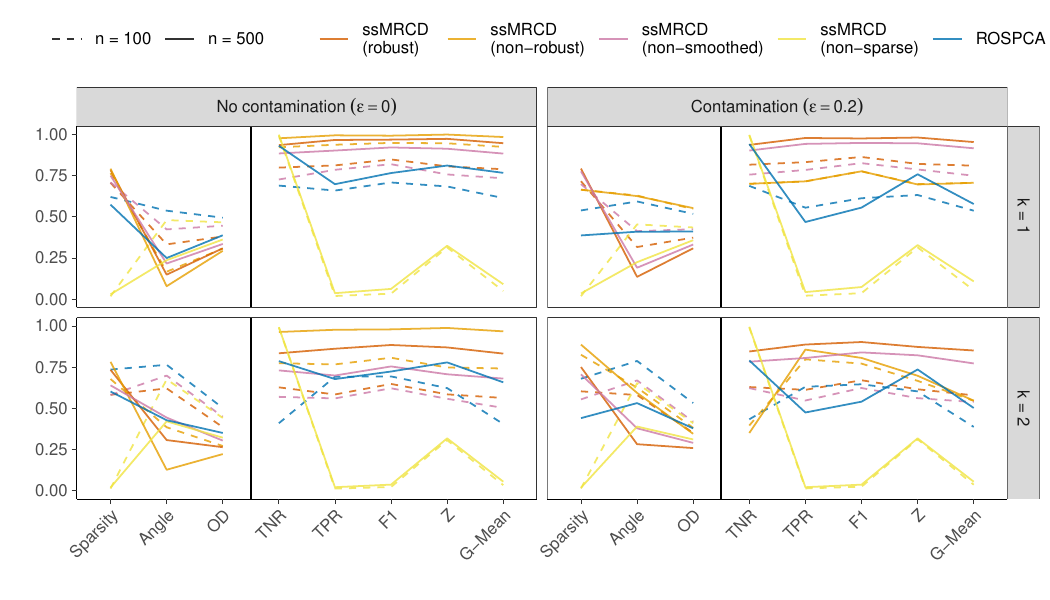}
    \caption{Comparison of performance of the first two components for simulation scenario 2 with equally contaminated data sources ($N = 10$). Solid lines are the mean performance for $n = 500$ observations, dashed lines for $n = 100$ observations.}
    \label{fig:simul2_result}
\end{figure}

In Figure~\ref{fig:simul2_result} our proposed method (ssMRCD) with and without robustness, smoothing or sparsity, is compared to ROSPCA. For each method we calculate the first two PCs for 100 repetitions and $p = 10$ with varying contamination level $\epsilon = 0\%, 20\%$ and number of data observations $n = 100, 500$, both constant over all $N = 10$ sources. We can see that for the first component the robust ssMRCD-based method provides better results in all measurements and settings than ROSPCA. Especially interesting compared to ROSPCA is the combination of higher sparsity with a lower angle and low OD. This implies a good \Pat{fit of the highly sparse loadings to the data and the subspace of highest variation}. Moreover, the good measurement scores for classification confirm that the correct sparsity structure is found. Compared to the non-robust of the ssMRCD-based method, the robust version is slightly less effective, especially with a lower number of observation. However, contamination is heavily affecting the non-robust method. Criteria connected to data reconstruction, i.e. the angle and OD, show inferior performance while the detection of sparsity patterns is comparable to ROSPCA. \Pat{When using the ssMRCD estimator without smoothing, the performance worsens slightly over all scenarios and performance measures. Lastly, the plug-in ssMRCD PCA method without sparsity is evaluated, showing clear and expected disadvantages compared to all other methods.}

When it comes to the second PC, ssMRCD methods (non-robust for $\epsilon = 0\%$, robust for $\epsilon = 20\%$) are still the best performing for $n = 500$. However, for a reduced number of observations $n = 100$ the performance of classifying the correct sparsity pattern is comparable or just slightly better than ROSPCA. Interestingly in some settings, ROSPCA does not take advantage of an increase in sample size. \Pat{The non-smoothed and the non-sparse ssMRCD-based methods show the same patterns as for PC1.}

\begin{figure}
    \centering
    \includegraphics[width = \textwidth, trim={0cm 0.65cm 0 0.3cm}, clip]{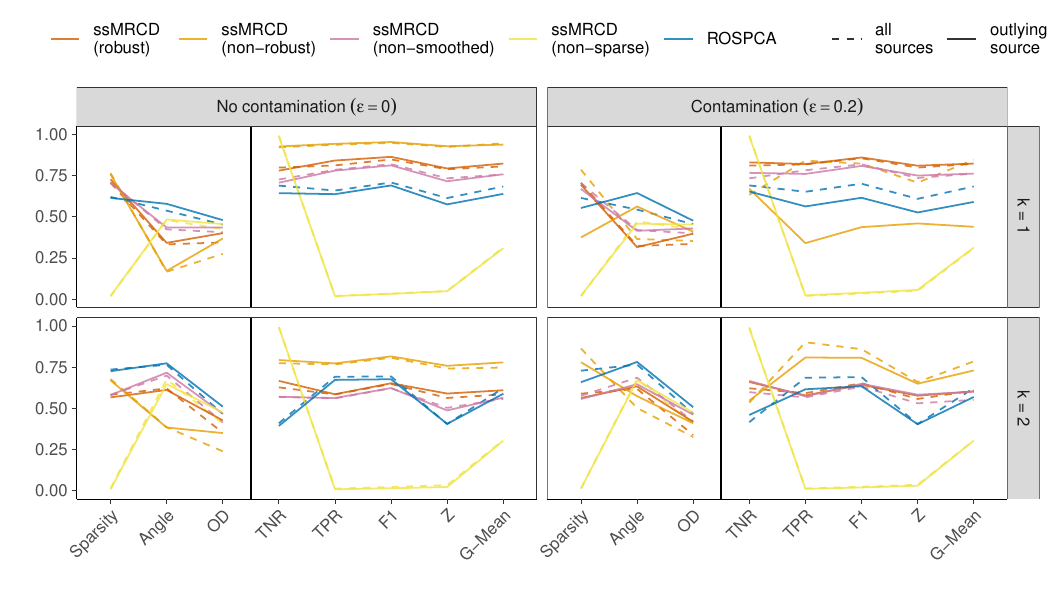}
    \caption{Comparison of performance of the first two components for simulation scenario 2 with locally contaminated data in source $5$ of $N = 10$ sources for $n = 100$. The result for data source $5$ is shown with solid lines, the mean over all sources is depicted with dashed lines.}
    \label{fig:simul3_result}
\end{figure}

Another interesting aspect to analyze is how \textit{locally contaminated data}, in this context meaning contamination in only one source, affects PCA results. Therefore, we stick to the data contamination setting with $N = 10$ data sources but instead of contaminating all sources equally, we only contaminate the fifth of $N = 10$ sources. Again, we use $n = 100$ observations, $100$ repetitions and $p = 10$ variables. The results are shown in Figure~\ref{fig:simul3_result}.

Again, we can see that the robust ssMRCD method generally provides consistently better results in PC1 and better or at least comparable results in PC2. However, when we analyze the results for the single contaminated data source, we see very stable results for the robust ssMRCD method. This is in contrast to the results of ROSPCA with contamination, where the performance on the contaminated source is clearly worse in almost all performance measures for both components. The non-robust version of the ssMRCD-based method also shows very good performance for the uncontaminated case. With local contamination we can see a stark performance decline in the contaminated source, especially in the first PC. Interestingly, the average over the sources is not affected to that extend but remains somehow stable. This indicates that provided a multi-source scenarios, inherent similarities in the covariances between groups should be leveraged. Applying additional smoothing stabilizes the covariance estimation, even in a non-robust setting, and in combination with groupwise sparsity we achieve reliable sparse loadings.

\section{Applications}
\label{sec:realdata}

We demonstrate the usefulness of the proposed sparse multi-source PCA method on two diverse applications, namely one on \Pat{multivariate time series data from an Austrian weather stations (Section~\ref{subsec:weather})}  and the second on \Pat{measurements of plant geochemistry (Section~\ref{subsec:plant})}.

\subsection{Weather Analysis at Hohe Warte}
\label{subsec:weather}

\begin{figure}
    \centering
    \includegraphics[width = \textwidth, trim={0cm 0.25cm 0 0.25cm}, clip]{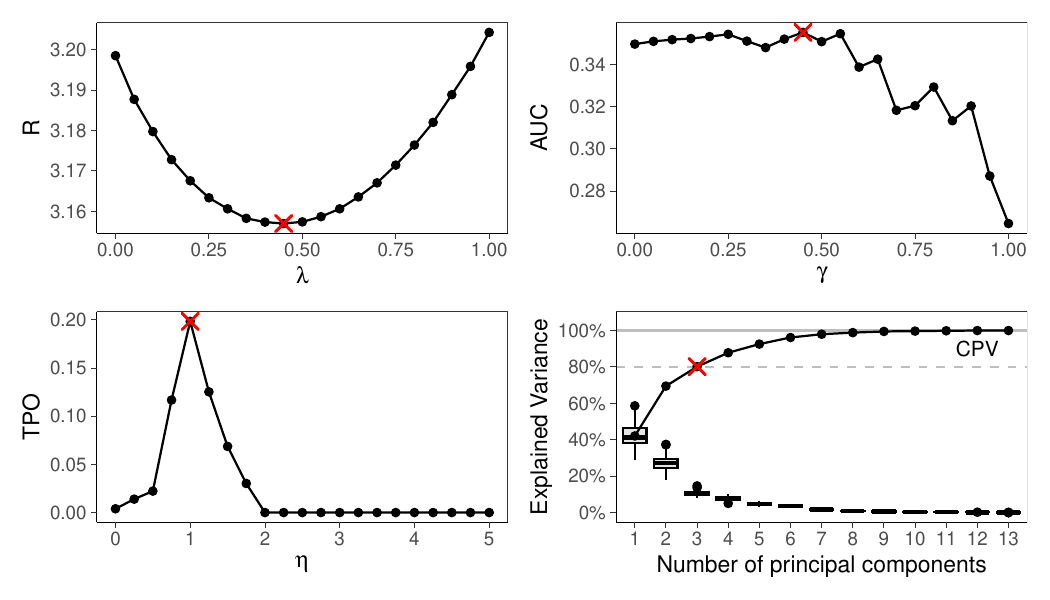}
    \caption{Hohe Warte weather station: Optimal smoothing and sparsity parameters.}
    \label{fig:weather_optparscree}
\end{figure}

We analyze daily weather measurements ($n=23372$) of the weather station \textit{Hohe Warte} in Vienna, Austria over the years 1960-2023 as provided by \citet{Geosphere2024}. The dataset consists of $p=13$ variables covering the amounts of sunshine, wind, cloud coverage as well as temperature, humidity, air pressure and visibility (see Appendix~\ref{sec:app_weathervars} for the full list) for $N=64$ sources corresponding to the different years. For preprocessing, we standardize the variables to the corresponding medians and the mean absolute deviations from \Pat{the years 1960 to 1980 that are used as a baseline for proceeding climatic developments}.

\begin{figure}
    \centering
    \includegraphics[width = \textwidth, trim={0cm 2.9cm 0 2.3cm}, clip]{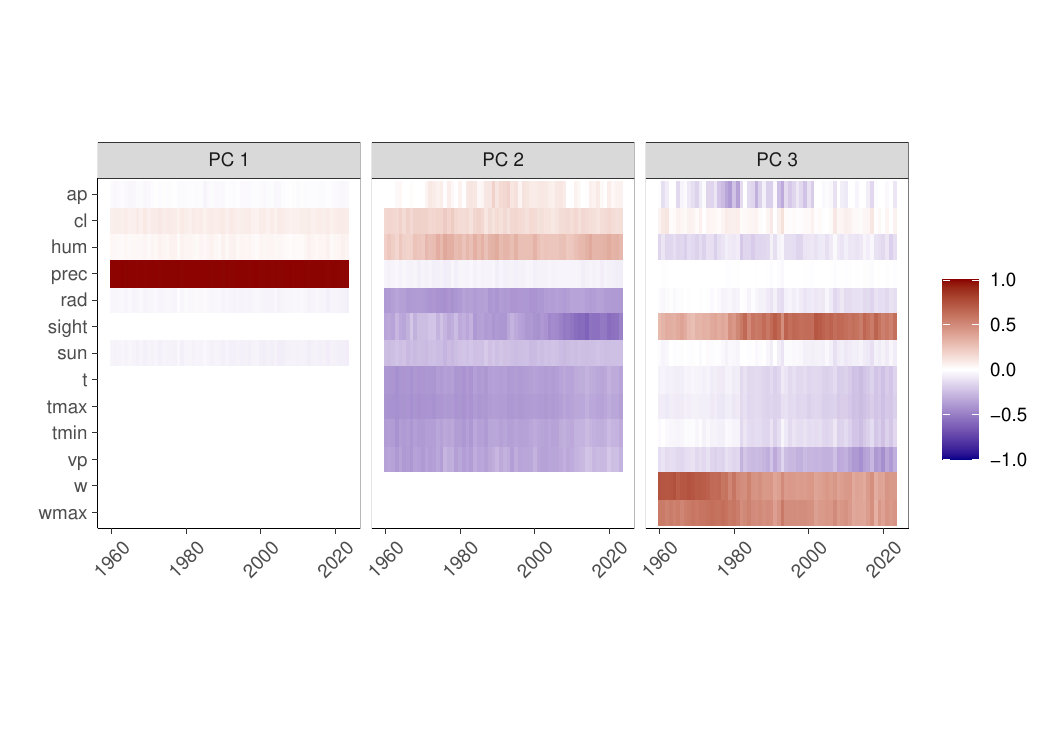}
    \caption{Hohe Warte weather station: Heat map of the loadings for the $p=13$ variables (rows) over time (columns) of the first three components (panels).}
    \label{fig:weather_loadings}
\end{figure}

For the computation of the ssMRCD plug-in estimator, we assume that each year is similar to five prior and five subsequent years with a linear decrease in similarity leading to the weighting matrix $\bb{W}$ structured as a band matrix given by
\begin{align*}
    \left( \begin{array}{ccccccc}
         0      & 5      & \ldots & 1      &         &   \\
         5      & \ddots & \ddots &        &  \ddots &   \\
        \vdots  & \ddots & \ddots & \ddots &         & 1 \\
        1       &        & \ddots & \ddots & \ddots  & \vdots \\
                & \ddots &        & \ddots & \ddots  & 5 \\
                &        &  1     & \ldots & 5       & 0
    \end{array} \right),
\end{align*}
where each row is then scaled to have an overall sum of 1 \citep[see also][]{Puchhammer2023}. The amount of smoothness,~$\lambda$, of the ssMRCD estimator is optimized over the interval~$[0,1]$ with step size~$0.05$ to minimize the residual norm in Equation~\eqref{eq:optimalsmoothing}. The upper left part of Figure~\ref{fig:weather_optparscree} shows the residual norm varying over the amount of smoothing and the optimal value of~$\lambda = 0.45$. Based on the optimally smoothed set of covariances, the sparsity parameters of the proposed sparse multi-source PCA method are selected using a step size of~$0.05$ for~$\gamma$ and~$0.25$ for~$\eta$. The optimal values are then~$\gamma = 0.25, \eta = 1$ (see upper right and lower left part of Figure~\ref{fig:weather_optparscree}, respectively). The corresponding boxplot-based scree-plot is shown in the lower left part of Figure~\ref{fig:weather_optparscree}. \Pat{Each boxplot is constructed per PC and is based on the individually explained variance per year for all $N = 64$ years.} According to the CPV-criterion Equation~\eqref{eq:CPV}, it is sufficient to analyze only the first three components, since they explain $80\%$ of the overall variance.

\begin{figure}
    \includegraphics[width = \textwidth, trim={0cm 0.2cm 0 0cm}, clip]{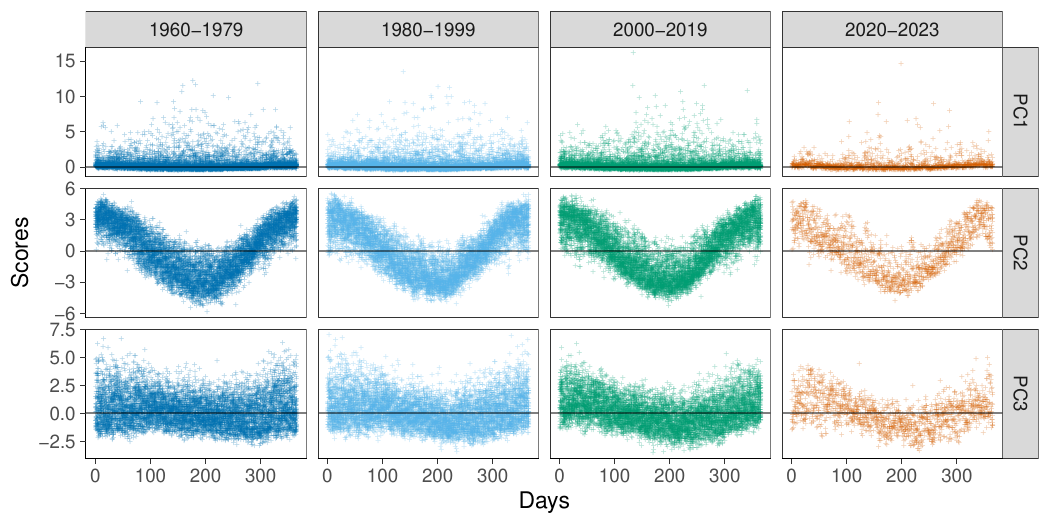}
    \caption{Hohe Warte weather station: Scores for each observation of the first three components (rows) partitioned into four consecutive time subsets (columns).}
    \label{fig:weather_scores}
\end{figure}

Figure~\ref{fig:weather_loadings} presents the sparsity patterns of the loadings for the first three PCs obtained by our sparse multi-source PCA method. We see three different causes of variation. The first PC (left panel) is mainly composed of precipitation and displays a clear global pattern across all years, meaning that precipitation drives most of the weather variation over the year. Figure~\ref{fig:weather_scores} displays the corresponding scores (top row), partitioned into time subsets, and shows rather constant variation over time. 
The second component (middle panel in Figure~\ref{fig:weather_loadings}) consists mainly of temperature, vapor pressure, sight, radiation and sun as well as humidity and cloud cover in the opposed direction. By inspecting the scores in the middle row of Figure~\ref{fig:weather_scores}, we can see that the second component captures the seasonality and the corresponding variability over each year. Again, the pattern seems to be rather stable over the 64 years. 

Finally, the loadings for the third component are visualized in the right panel of Figure~\ref{fig:weather_loadings} and consist again of temperature variables and sight. Yet, in contrast to the first two PCs, a trend becomes visible in their loadings on the third PC as can be seen from darker colors in the heat map for the more recent years. Also maximal wind speed and wind velocity display important loading entries, which are stable or rather decreasing in importance for variability over the years, respectively. This changing pattern over the years is also visible in the shape of the scores displayed in Figure~\ref{fig:weather_scores}, bottom row. While the years 1960-1979 do not seem to exhibit seasonality in the scores, such a pattern becomes more apparent for the more recent years. While pinpointing the source of variability for the third component is more difficult than for the first two components, one possible source could be related to climate change, 
apparent from the evolving trend over the long time frame of 64 years. The smooth transitioning over time as mainly visible in the third component is directly detectable from our multi-source PCA method whereas it would remain unnoticed from a standard (global) PCA analysis.

\subsection{Geochemical Plant Analysis}
\label{subsec:plant}

Our second application demonstrates the usefulness of sparse  PCA for multi-source data with a more general grouping structure.The data consists of $n=547$ observations of $p=19$ element concentrations originating of $N=6$ different plant species (Norway Spruce, Common Juniper and Scots Pine) and organs (bark, needle, twig) and was collected during the NEXT project funded by the EU \citep{NEXTdata} in order to draw conclusions for mineral exploration. The aim is to explore differences and similarities of variance among the different plant groups and the suitability of the scores to discriminate between mineralizations and non-mineralizations.

For applying the ssMRCD estimator, we assume equal amounts of similarity between observations of the same plant species or of the same organs when constructing the weight matrix $\bb{W}$. Moreover, due to the compositional nature of element concentrations, we apply the standard \textit{isometric log-ratio}(ilr) transformation to the data \citep[see e.g.][]{Filzmoser2018} known from compositional data analysis. Based on the optimal smoothing criteria $R$, the optimal value for group smoothing is $\lambda = 0.35$.

\Pat{After the calculation of the ssMRCD covariance matrices, they are (linearly) transformed from ilr to \textit{centered log-ratios} (clr) coordinates to increase the interpretability of the principal components' loadings and scores. The clr-transformation essentially standardizes each variable with the geometric mean per observation, followed by a log-transformation.} While clr leads to linear dependent variables opposed to ilr, leading to numerical issues for covariance estimation (especially determinant based estimators like the ssMRCD estimator), it is possible to intuitively interpret clr as relative importance of elements which is not possible for ilr. Thus, we apply the sparse multi-source PCA algorithm to the transformed covariance matrix of clr variables. Optimal parameters are then given by $\gamma = 0.6, \eta = 0.15$ \Pat{(see also Figure~\ref{fig:plants_optpars} in Appendix~\ref{sec:app_plants}).}

In Figure~\ref{fig:plant_loadings} the loadings of the first and second principal components explaining around $33\%$ of overall variance are shown per source, being a combination of a plant species and organ\footnote{\Pat{Since 10 PCs are needed to explain $80\%$ of the data (see also Figure~\ref{fig:plants_optpars}) we will focus on the first two components for interpretation.}}. We see clear similarities across all organs of the juniper species in both components and of the spruce species for the first component. Only in the second component the organs of the same species (spruce) start to show differences. Moreover, pine bark has the most complexity in the loading structure of the first component. This combination of heavy metals like uranium (U), vanadium (V) and lead (Pb) against phosphorus (P), potassium (K) and rubidium (Rb) is to be expected from physiological characteristics of pine bark.

\begin{figure}
    \centering
    \includegraphics[width = \textwidth,
    trim={0cm 0.5cm 0 0cm}, clip]{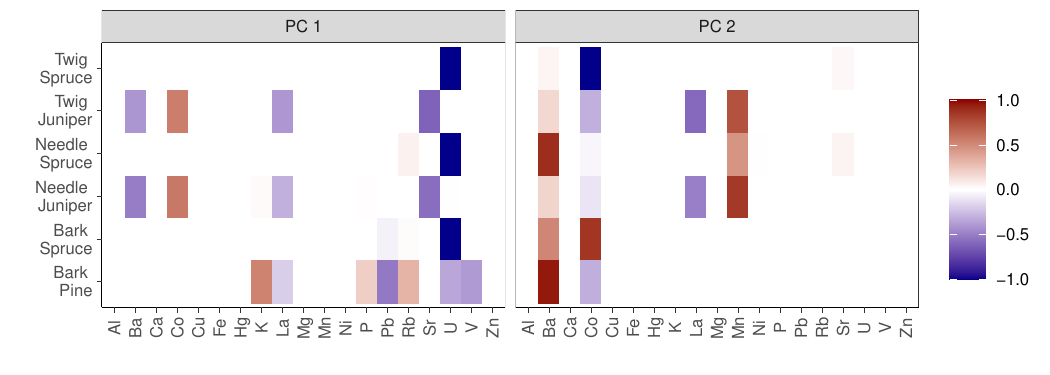}
    \caption{Plant geochemistry: Heat map of loadings for the $p=19$ variables (columns) over plant-organ species combination (rows) on the first and second principal component (panels)}.
    \label{fig:plant_loadings}
\end{figure}

Note that the loadings of standard PCA on clr coordinates sum up to zero due to the linear dependence of clr variables \citep{Jolliffe2002}. \Pat{For sparse PCA, the dependent structure of compositional data can be taken care of explicitly e.g. via sparse principal balances \citep{Mert2015}.} However, especially in our data example with only few variables, including the zero sum constraint could lead to high importance of a variable just to fulfill the constraint (e.g.\ for PC1 in group Twig Spruce in Figure~\ref{fig:plant_loadings}). We believe that by forcing the zero sum constraint in a sparse PCA setting with a small number of compositional variables, unwanted noise might be introduced, complicating the interpretation of the loadings.

When it comes to mineral exploration, a goal of the NEXT project, it would be interesting if we can find a plant organ-species combination and a direction of variation along which the discrimination between mineralizations and non-mineralizations is visible. To investigate this, we use the geological classification between calcsilicate rocks and mafic rocks. Mafic rocks are often associated with volcanic and intrusive activities and they can indicate the presence of specific mineral deposits like nickel, copper, and platinum group elements. 

Taking a look at Figure~\ref{fig:plant_score} we can see the distribution of the scores connected to the first (left) and second (right) PC as density and the median as vertical lines for different groups split into observations connected to calcsilicate and mafic rocks. Other geological units are not shown. 
The bark of Scots pine has the most differentiable peaks and medians, indicating the possibility to use this plant organ-species combination with the elements of the first loading for mineral exploration. Similar conclusions can be made for the second PC. \Pat{Here, Norway spruce tends to differentiate the most between the two geologies across all organs, indicating potential leverage for geological and mineral exploration.} However, the geology and other external variables of the respective dataset can vary heavily and other sources of variation like soil moisture, amount of till, fine fraction of the sample or physiological effects of the plants mentioned before can also be part of variation described by the PCs.

\begin{figure}
    \centering
    \includegraphics[width = \textwidth, trim={0cm 0.4cm 0 0.4cm}, clip]{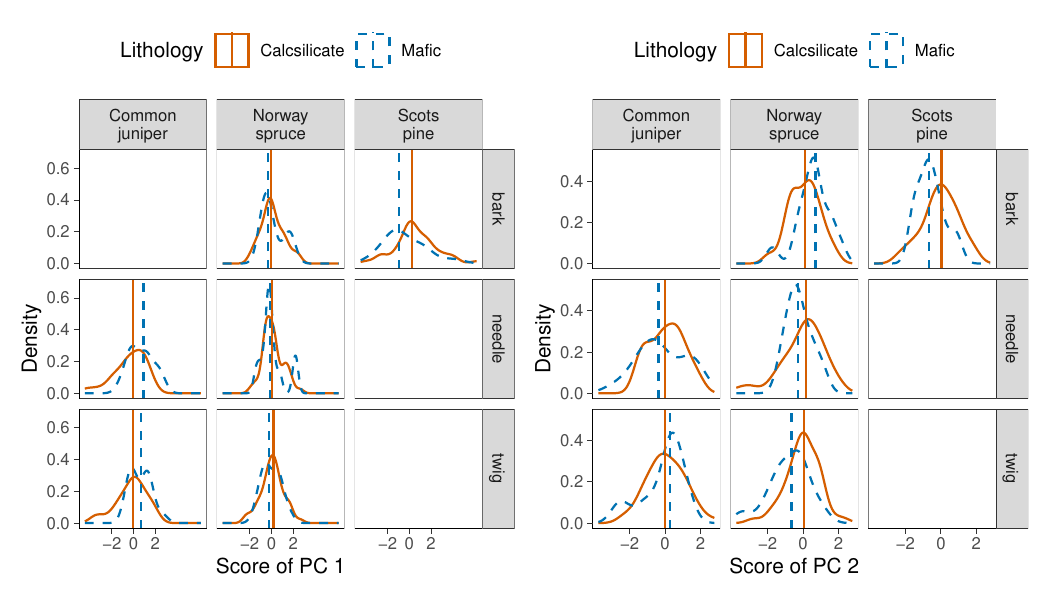}
    \caption{Plant geochemistry: Density and median (vertical line) for calcsilicate rock (solid) and mafic rock (dashed) measurements for the first PC for all plant species (columns) - organ (rows) combinations. Note that empty panels correspond to combinations of plant species and organs not present in the considered dataset.}
    \label{fig:plant_score}
\end{figure}

\section{Conclusions}
\label{sec:conclusion}

We introduce sparse PCA analysis for multiple related data sources to permit the detection of global as well as local, source-specific sparsity patterns in the PCA loadings. To this end, we propose an optimization problem that maximizes explained variances across the multiple data sources while inducing structured sparsity patterns. The ssMRCD estimator is used as plug-in into the optimization problem and perfectly fits the spirit of combined global-local sparsity patterns by delivering local covariances that are smoothed over groups. Moreover, it provides protection against the presence of outliers in the data.

We provide a computationally efficient algorithm based on the ADMM to obtain sparse outlier-robust PCA loadings. Algorithmic parameters are fine-tuned and convergence is achieved in all applications and simulations. Care is given to optimally select the hyperparameters controlling the degree of sparsity and smoothing properties of the ssMRCD estimator tailored to the PCA context.

The proposed sparse multi-source PCA method performs well in simulation settings mimicking structured sparsity and it outperforms non-robust counterparts as well as the state-of-the-art sparse, robust PCA method ROSPCA when outliers are present. \Pat{The versatility of the multi-source method is illustrated on two different applications.}

\Ines{
\Pat{Possible further application scenarios entail also a wide variety of data, where the grouping structure is not fixed upfront. The flexibility of the method regarding the source-definition can also be leveraged for, e.g., the large field of spatial data.}
Finally, our multi-source perspective to sparse, outlier-robust PCA holds also promise for other popular multivariate analyses such as discriminant analysis, graphical modeling or canonical correlation analysis.}

\section*{Acknowledgements} 
We sincerely thank our esteemed colleague, Solveig Pospiech, for her invaluable geological and biological expertise on plants, geology and geochemistry. Her generous contributions have greatly enhanced the precision and depth of our interpretations. Co-funded by the European Union (SEMACRET, Grant Agreement no. 101057741) and UKRI (UK Research and Innovation). The authors acknowledge TU Wien Bibliothek for financial support through its Open Access Funding Programme. Ines Wilms is supported by a grant from the Dutch Research Council (NWO), research program Vidi under the grant number VI.Vidi.211.032.

\pagebreak

\bibliographystyle{apalike}
\bibliography{references}

\pagebreak

\appendix
\setcounter{proposition}{0}

\section{ADMM minimization problems}
\label{sec:app_subproblems}

For ease of notation, we introduce the matrix notation of the vectorized ADMM components $\bb{u}_{(1)}, \bb{u}_{(2)}, \bb{u}_{(3)}$ and $\bb{v}_{(0)}, \bb{v}_{(1)}, \bb{v}_{(2)}$ and $\bb{v}_{(3)}$ similar to \eqref{eq:notation_matrix} as
\begin{align*}
     \bb{U}^m_{(i)} = \left( 
    \begin{array}{ccc}
         \bb{u}_{(i), 1}^m&  \ldots & \bb{u}_{(i), (N-1)p +1 }^m\\
         \vdots &  & \vdots \\
         \bb{u}_{(i), p}^m & \ldots & \bb{u}_{(i), Np}^m
    \end{array}
    \right) = (\bb{u}_{(i),\cdot 1}^m, \ldots, \bb{u}_{(i),\cdot N}^m) = (\bb{u}_{(i),1\cdot}^{m'}, \ldots, \bb{u}_{(i),p\cdot}^{m'})', \\
     \bb{V}^m_{(i)} = \left( 
    \begin{array}{ccc}
         \bb{v}_{(i), 1}^m&  \ldots & \bb{v}_{(i), (N-1)p +1 }^m\\
         \vdots &  & \vdots \\
         \bb{v}_{(i), p}^m & \ldots & \bb{v}_{(i), Np}^m
    \end{array}
    \right) = (\bb{v}_{(i),\cdot 1}^m, \ldots, \bb{v}_{(i),\cdot N}^m) = (\bb{v}_{(i),1\cdot}^{m'}, \ldots, \bb{v}_{(i),p\cdot}^{m'})'.
\end{align*}
The notation for the variables without superscript is likewise as well as for the vector $\bb{z}$ used in Equation~\eqref{eq:z}.

\subsection{Minimization problem for $\boldsymbol{v}_{(1)}$}
\label{subsec:app_PCAsubproblem}

Due to the block-diagonal structure of $\bb{\hat{\Sigma}}$, the additivity of the quadratic Frobenius norm and the separable constraints, the minimization problem can be separated among sources. Thus, per source $a_i$, we have the following minimization problem in  $\bb{v} \in \mathbb{R}^p$ for each iteration step $m$
\begin{align*}
    \min_{v} \hspace{0.5cm} & - \bb{v}'\bb{\hat{\Sigma}}_i \bb{v}  + \frac{\rho}{2} ||\bb{v} + \underbrace{\frac{1}{\rho}\bb{u}_{(1), \cdot i}^{m} - \bb{v}_{(0), \cdot i}^m}_{=:\bb{c}}||_2^2 \\
    s.t. \hspace{0.5cm} & \bb{v}'\bb{v} = 1,  \\
    & \bb{v}'\bb{v}_{\cdot i}^l = 0,  \hspace{0.5cm} 1 \leq l < k  \\
    & \bb{z}_{\cdot i}'\bb{v} \ge 0.
\end{align*}
The problem is non-convex but differentiable and, thus, can be solved by calculating the Lagrangian
\begin{align*}
    \mathcal{L}(\bb{v})  &= -\bb{v}'\bb{\hat{\Sigma}}_i \bb{v} + 
    \frac{\rho}{2} ||\bb{c} + \bb{v}||_2^2 - 
    \mu\bb{z}_{\cdot i}'\bb{v} + 
    \lambda_0 (\bb{v}'\bb{v}-1) + 
    \sum_{l = 1}^{k-1} \lambda_l (\bb{v}'\bb{v}_{\cdot i}^l)
\end{align*}
and applying the Karush-Kuhn-Tucker (KKT) conditions,
\begin{align}
    \nabla_{\bb{v}} \mathcal{L}(\bb{v}) &= -2\bb{\hat{\Sigma}}_i \bb{v} + \rho (\bb{c} + \bb{v}) - \mu \bb{z_{\cdot i}} + 2\lambda_0 \bb{v} + \sum_{l = 1}^{k-1} \lambda_l \bb{v}_{\cdot i}^l = \bb{0}, \label{eq:KKT1} \\
    g(\bb{v})    &= -\bb{z}_{\cdot i}'\bb{v} \le 0, \label{eq:KKT2} \\
    h_0(v)  &= \bb{v}'\bb{v} -1 = 0, \label{eq:KKT3} \\
    h_l(v)  &= \bb{v}'\bb{v}_{\cdot i}^l = 0,  \hspace{0.5cm} \forall 1 \le l < k, \label{eq:KKT4} \\
    \mu    &\ge 0,  \label{eq:KKT5} \\
    \mu \bb{z}_{\cdot i}'\bb{v}  &= 0. \label{eq:KKT6}
\end{align}
For speed we can derive a term for $\lambda_0$ by multiplying equation \eqref{eq:KKT1} from left with $\bb{v}'$, which cancels $\lambda_l$ and $\mu$ due to the optimality conditions \eqref{eq:KKT6} and \eqref{eq:KKT4}, 
\begin{align*}
 -2\bb{v}'\bb{\hat{\Sigma}}_i \bb{v} + \rho \bb{v}'(\bb{c} + \bb{v}) - \underbrace{\mu \bb{z}_{\cdot i}'\bb{v}}_{ = 0} + 2\lambda_0 \underbrace{\bb{v}'\bb{v}}_{=1} + \sum_{l = 1}^{k-1} \lambda_l \underbrace{\bb{v}'\bb{v}_{\cdot i}^l}_{=0} = 0\\ 
    \lambda_0 = \bb{v}'\bb{\hat{\Sigma}}_i \bb{v} - \frac{\rho}{2} \bb{v}'(\bb{c} + \bb{v}).
\end{align*}

It is also possible to calculate $\lambda_l$ as a function of $\mu$ and $\bb{v}$ by multiplying with $(\bb{v}^{l}_{\cdot i})'$,
\begin{align*}
    -2(\bb{v}^{l}_{\cdot i})'\bb{\hat{\Sigma}}_i \bb{v} + \rho \underbrace{(\bb{v}^{l}_{\cdot i})'(\bb{c} + \bb{v})}_{= \bar{v_j}'\bb{c}} - \mu (\bb{v}^{l}_{\cdot i})' \bb{z}_{\cdot i} + \underbrace{\sum_{l = 1}^{k-1} \lambda_l (\bb{v}^{l}_{\cdot i})'\bb{v}^{l}_{\cdot i}}_{=\lambda_l} = 0 \\
    \lambda_l = 2(\bb{v}^{l}_{\cdot i})'\bb{\hat{\Sigma}}_i \bb{v} - \rho (\bb{v}^{l}_{\cdot i})'\bb{c} + \mu (\bb{v}^{l}_{\cdot i})' \bb{z}_{\cdot i}.
\end{align*}
However, substituting $\lambda_l$ with the exact expression derived above has proven to deteriorate precision in the orthogonality constraints without a significant gain in speed.

It is not possible to derive an analytical solution due to third and higher-order terms after substituting the multiplier $\lambda_0$ into Equation \eqref{eq:KKT1}. We have to resort to solving the root constraints \eqref{eq:KKT1}, \eqref{eq:KKT4} and \eqref{eq:KKT6} numerically using the function \texttt{multiroot} from the R-package \texttt{rootSolve} \citep{rootSolve}. Additionally, we need to ensure that all other constraints are also fulfilled after finding a root.  we apply the concept of \textit{warm starts} and use $\bb{v}^m_{(0),\cdot i}$ as the starting value for the root finder. If no feasible root is found, we increase $\rho$ until a feasible root is found.  

Regarding regularity conditions, we can check the \textit{linear independence constraint qualification} (LICQ) condition, where we need linear independence of all $\nabla h_l(\bb{v}) = \bb{v}_{\cdot i}^l$, $\nabla h_0(\bb{v}) = \bb{v}$ and $\nabla g(\bb{v}) = \bb{z}_{\cdot i}$ if $g(\bb{v}) = \bb{z}_{\cdot i}'\bb{v} = 0$. By design, $\nabla h_l(\bb{v})$ and $\nabla h_0(\bb{v})$ are independent since the components are all orthogonal. If $g(\bb{v}) = \bb{z}_{\cdot i}'\bb{v} = 0$, we are orthogonal to $\nabla h_0(\bb{v})$. Additionally choosing $\bb{z}_{\cdot i}$ orthogonal to all prior loadings, the regularity condition is fulfilled for all $\bb{v}$, implying that it is sufficient to look at points fulfilling the KKT conditions to find the optimum. Since for each source $a_i$ $\bb{z}_{\cdot i}$ is chosen as the starting value $\bb{y}_{k, \cdot i}$ which is in the given feasible space and thus part of the orthogonality space of~$\bb{v}_{\cdot i}^l$, the LICQ condition is fulfilled.

\subsection{Minimization problem for $\bb{v}_{(2)}$}

The objective function to minimize, 
\begin{align*}
    \eta \gamma  ||\bb{v}_{(2)}||_1 + \frac{\rho}{2}|| \frac{1}{\rho} \bb{u}_{(2)}^m + \bb{v}_{(2)} - \bb{v}_{(0)}^m ||^2_2,
\end{align*}
is separable across sources due to the squared Frobenius norm and the L$_1$-norm. The analytical solution is thus given by the proximal operator of the L$_1$-norm, i.e. element-wise soft thresholding \citep{Boyd2011, Wilms2022} for each entry of $\bb{v}_{(2)}$,
\begin{align*}
\label{eq:min2}
    \bb{v}_{(2), i}^{m+1} = S\left( \bb{v}_{(0),i}^m -\bb{u}_{(2),i}^m/\rho, \eta\gamma/\rho\right) \hspace{0.5cm} \forall i = 1, \ldots, Np,
\end{align*}
with $S(x, \lambda) = \text{sign}(x) \max(|x| - \lambda, 0)$.

\subsection{Minimization problem for $\bb{v}_{(3)}$}

The part of the minimization function connected to the groupwise sparsity, $f_3(\bb{v}_{(3)})$, can be rewritten as
\begin{align*}
    f_3(\bb{v}_{(3)}) &= \eta (1-\gamma) \sqrt{N}\sum_{j= 1}^p \sqrt{\bb{v}_{(3)}' \bb{C}_j \bb{v}_{(3)}}= \eta (1-\gamma) \sqrt{N}\sum_{j= 1}^p ||\bb{v}_{(3), j.}||_2.
\end{align*}
Thus, we can use the groupwise/block soft thresholding operator, which is the proximal operator of the L$_1$-norm of subgroups \citep{Boyd2011, Wilms2022}
\begin{align*}
    \bb{v}_{(3), j\cdot}^{m+1} = S_G(\bb{v}_{(0), j\cdot}^m - \bb{u}_{(3), j\cdot}^m/\rho, \eta(1-\gamma) \sqrt{N}/\rho)
\end{align*}
with $S_G(x, \lambda) =  \max(1 - \lambda/||x||_2, 0)x$.


\pagebreak
\section{Starting values}
\label{sec:app_startingvalues}

As mentioned above the extreme solution for $\eta \rightarrow \infty$ depends on $\gamma$, $\bb{y}_k^{\infty}(\gamma)$. Here, we restate Proposition~\ref{proposition:extremes} and give the proof. 

\begin{proposition}
Using the notation of Equation~\eqref{eq:objective_matrix}, define $G_1(\bb{v}) = \sum_{j = 1}^p \sum_{i = 1}^N |v_{ji}| = \sum_{i = 1}^N ||\bb{v}_{\cdot i}||_1$ and $G_2(\bb{v}) = \sum_{j = 1}^p ||\bb{v}_{j \cdot}||_2$. 
\begin{itemize}
    \item[a.] For each source $i = 1, \ldots, N$ and given the normality constraint $||\bb{v}_{\cdot i}||_2 = 1$, the minimal value of $||\bb{v}_{\cdot i}||_1$ is attained, if there exists a variable $j'(i)$ such that $|v_{j'(i)i}| = 1$ and $v_{ji} = 0$ for all $j \neq j'(i)$. 
    \item[b.] Given the normality constraints $||\bb{v}_{\cdot i}||_2 = 1, i = 1, \ldots, N$, the minimal value of $G_2(\bb{v})$ is attained if there exists a variable~$j'$ for all sources $i = 1, \ldots, N$ such that $|v_{j'i}| = 1$ and $v_{ji} = 0$ for all~$j \neq j'$.
    \item[c.] The minimizers of $G_1(\bb{v})$ with the highest explained variance $\sum_{i = 1}^N \bb{v}_{\cdot i}' \bb{\hat{\Sigma}_i} \bb{v}_{\cdot i}$ have corresponding indices for non-zero entries $j'(i) \in \mathrm{arg} \max_{j = 1, \ldots, p}  \left(\bb{\hat{\Sigma}_i}\right)_{jj}$ for each source $i$. 
    The minimizers of $G_2(\bb{v})$ with highest explained variance have non-zero entries only for the variable indexed by~$j' \in \mathrm{arg} \max_{j = 1, \ldots, p} \sum_{i = 1}^N \left(\bb{\hat{\Sigma}_i}\right)_{jj}$.
\end{itemize}
\end{proposition}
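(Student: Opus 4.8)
The plan is to prove the three parts in sequence, with (a) and (b) being essentially elementary norm inequalities and (c) following by combining them with a direct maximization of the quadratic form over the restricted (one-hot) feasible set.

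\medskip

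\textbf{Part (a).} For a fixed source $i$, I want to minimize $\|\bb{v}_{\cdot i}\|_1$ subject to $\|\bb{v}_{\cdot i}\|_2 = 1$. The key fact is the standard norm inequality $\|\bb{w}\|_1 \ge \|\bb{w}\|_2$ for any $\bb{w} \in \mathbb{R}^p$, with equality if and only if $\bb{w}$ has at most one nonzero component. Applying this with $\bb{w} = \bb{v}_{\cdot i}$ gives $\|\bb{v}_{\cdot i}\|_1 \ge \|\bb{v}_{\cdot i}\|_2 = 1$, and the lower bound $1$ is attained exactly when $\bb{v}_{\cdot i}$ is supported on a single coordinate $j'(i)$, which combined with the unit-$\ell_2$-norm constraint forces $|v_{j'(i)i}| = 1$. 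I would include a one-line justification of the equality case (e.g.\ $\|\bb{w}\|_1^2 = \sum_j v_{ji}^2 + \sum_{j \ne l} |v_{ji}||v_{li}| \ge \|\bb{w}\|_2^2$, with the cross terms vanishing iff the support is a singleton).

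\medskip

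\textbf{Part (b).} Here I sum over variables $j$ the Euclidean norms of the rows $\bb{v}_{j\cdot}$. I would interchange the roles of rows and columns: $\sum_{j=1}^p \|\bb{v}_{j\cdot}\|_2 \ge \bigl( \sum_{j=1}^p \|\bb{v}_{j\cdot}\|_2^2 \bigr)^{1/2} = \|\bb{V}\|_F = \bigl( \sum_{i=1}^N \|\bb{v}_{\cdot i}\|_2^2 \bigr)^{1/2} = \sqrt{N}$, again using $\|\cdot\|_1 \ge \|\cdot\|_2$ on the vector of row-norms, and the column normality constraints to evaluate the Frobenius norm. Equality in the first inequality holds iff the vector of row-norms is supported on a single index $j'$, i.e.\ only row $j'$ is nonzero; together with $\|\bb{v}_{\cdot i}\|_2 = 1$ for every $i$ this forces $|v_{j'i}| = 1$ and $v_{ji} = 0$ for $j \ne j'$, uniformly in $i$. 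This is the cleanest argument; the alternative of optimizing $G_2$ directly is messier.

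\medskip

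\textbf{Part (c).} Given the characterization of the minimizers from (a) and (b), the remaining freedom is the choice of the support index (or indices) and the signs. For the $G_1$ case, the minimizers decouple across sources, and for source $i$ the explained-variance contribution of a one-hot unit vector on coordinate $j$ is exactly $\bb{e}_j' \bb{\hat\Sigma}_i \bb{e}_j = (\bb{\hat\Sigma}_i)_{jj}$ (independent of sign); so maximizing the total variance over minimizers amounts to picking, for each $i$ independently, $j'(i) \in \arg\max_j (\bb{\hat\Sigma}_i)_{jj}$. For the $G_2$ case, the minimizers share a common index $j'$ across all sources, so the total variance is $\sum_{i=1}^N (\bb{\hat\Sigma}_i)_{j'j'}$, and maximizing over $j'$ gives $j' \in \arg\max_j \sum_{i=1}^N (\bb{\hat\Sigma}_i)_{jj}$. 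I would also note, as flagged after the proposition, that these argmaxes need not be unique (e.g.\ for correlation matrices all diagonal entries equal $1$), and that the sign ambiguity and the interaction with the orthogonality constraints for higher components are handled separately in the surrounding discussion.

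\medskip

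\textbf{Main obstacle.} None of the steps is deep; the only thing requiring slight care is the \emph{equality-case} analysis in the $\|\cdot\|_1 \ge \|\cdot\|_2$ inequalities — making sure the "at most one nonzero entry" characterization is stated and used correctly in both (a) and (b), and then correctly propagating it through (c) so that the support-selection problem is genuinely the stated diagonal-entry maximization (in particular, that every such one-hot configuration is feasible, i.e.\ satisfies the normality constraints, and that no non-extreme minimizer can do better on the variance term without violating minimality of the penalty). The orthogonality constraints for $k \ge 2$ are deliberately outside the scope of this proposition, as the surrounding text explains, so I would not address them here.
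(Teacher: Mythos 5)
Your proposal is correct and follows essentially the same route as the paper's proof: part (a) via the squared $\ell_1$ expansion with vanishing cross terms, part (b) by applying the same inequality to the vector of row norms $x_j = ||\bb{v}_{j\cdot}||_2$ and evaluating the Frobenius norm as $\sqrt{N}$, and part (c) by reducing the variance maximization over one-hot minimizers to a choice of diagonal entries (which the paper simply labels ``trivial''). No substantive differences to report.
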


\begin{proof}
\begin{itemize}
    \item [a.]  First, we know for any $\bb{v}$
        \begin{align*}
            1 = ||\bb{v}_{\cdot i}||_2^2 = \sum_{j = 1}^p v_{ji}^2 \le \sum_{j = 1}^p v_{ji}^2 + 2\sum_{j'<j} | v_{j'i}| | v_{ji}| =
            ||\bb{v}_{\cdot i}||_1^2
        \end{align*}
        and that the proposed minimizer of $||\bb{v}_{\cdot i}||_1$ has the minimal objective function value of 1. Moreover, all other minimizer have to fulfill that $| v_{j'i}| | v_{ji}| = 0$ for all~$j'<j$ to reach the minimal objective function value of 1. Thus, all minimizers have exactly one entry unequal to zero per source.
    \item [b.] Define $x_j = \sqrt{\sum_{i = 1}^N v_{ji}^2} = ||\bb{v}_{j \cdot}||_2$. Then, based on the inequality of part a, it holds that
    \begin{align*}
        N = \sum_{i = 1}^N ||\bb{v}_{\cdot i}||_2^2 = \sum_{i = 1}^N \sum_{j = 1}^p v_{ji}^2 = \sum_{j = 1}^p ||\bb{v}_{j \cdot}||_2^2 = ||x||_2^2 \leq ||x||_1^2 = \left(\sum_{j = 1}^p ||\bb{v}_{j \cdot}||_2 \right)^2=  G_2(\bb{v})^2.
        \end{align*}
        The extreme solution proposed above has an objective function value of $N$, which is thus minimal. Again, the same argument applies as before implies, that all mixed terms  $| x_{j'}| | x_{j}|$ must be equal to zero for all $j' \neq j$ to reach equality of the two norms.
    \item [c.] Trivial.
\end{itemize}
\end{proof}

For the special but important case of correlation matrices, we need an adaptation since the variable chosen by the explained variance is not unique. In order to ensure consistent behavior, we calculate the k-th eigenvectors of each correlation matrix, scale it with the root of the respective eigenvalue, and take the mean. The variable with the absolute highest value will be taken as the groupwise solution for $\eta \rightarrow \infty$. Although all variables are valid solutions for  $\eta \rightarrow \infty$, choosing an extreme solution close to $ \bb{y}_k^0$, we get more consistency over varying $\lambda$ and thus better convergence.

\subsection{Simulation results}
\label{subsec:app_startingvalues_simul}

In Figure~\ref{fig:starts_cov} the performance of the proposed starting value for simulation scenario 1 and the first four principal components is illustrated. We apply $\rho = p$, $\epsilon_{root} = 10^{-1}, \epsilon_{ADMM} = 10^{-4}, \epsilon_{thr} = 0.005$. For PCs 2 to 4 we iteratively use the best solution of all starting values for the orthogonality constraints. We simulate 100 random starting values, where each entry of the starting values is drawn from a standard normal distribution, and the vector is then projected onto the feasible space using $P_{\bar{V}}^1$. The values of the objective function of the random starting values are shown as boxplots for varying $\eta \in [0, 2]$ and $\gamma = 0, 0.5, 1$. The crosses indicate the objective function value for the proposed starting value. It is clearly visible, that the proposed starting value reliably produces optimal solutions and is thus a valid alternative to using many random starting values.

In Figure~\ref{fig:starts_cor} the results for correlation matrices with $\gamma = 0, 0.5, 1$ are shown. Regarding correlation matrices, there are multiple optimal extreme solutions, since all variables have the same amount of variance univariately. If there are multiple optimal solutions obtained in the simulations the one with the highest absolute scalar product with the extreme solution $ \bb{y}_k^0$ is taken for the orthogonality constraints for the simulation for the next PC. This accounts for the fact that the extreme solutions $\bb{y}_k^{\infty}(\gamma)$ (as well as $ \bb{y}_k^0$ for higher components) are based on the prior extreme solution components.

\begin{figure}
    \centering
    \includegraphics[width = \textwidth]{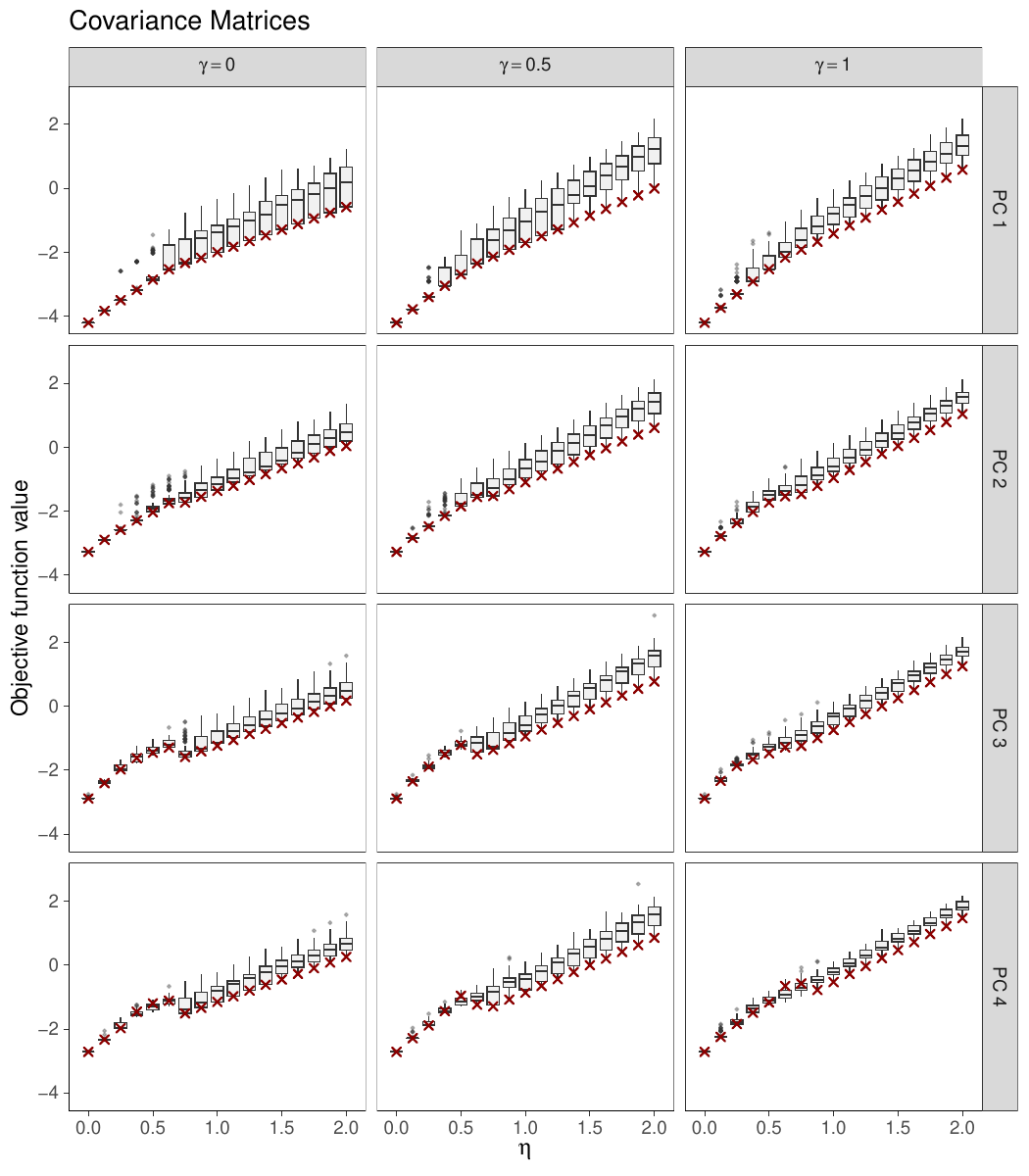}
    \caption{Objective function values for the proposed starting value (cross) and for random starting values (boxplot) for covariance matrices and $\gamma = 0.5$.}
    \label{fig:starts_cov}
\end{figure}

\begin{figure}
    \centering
    \includegraphics[width = 0.9\textwidth]{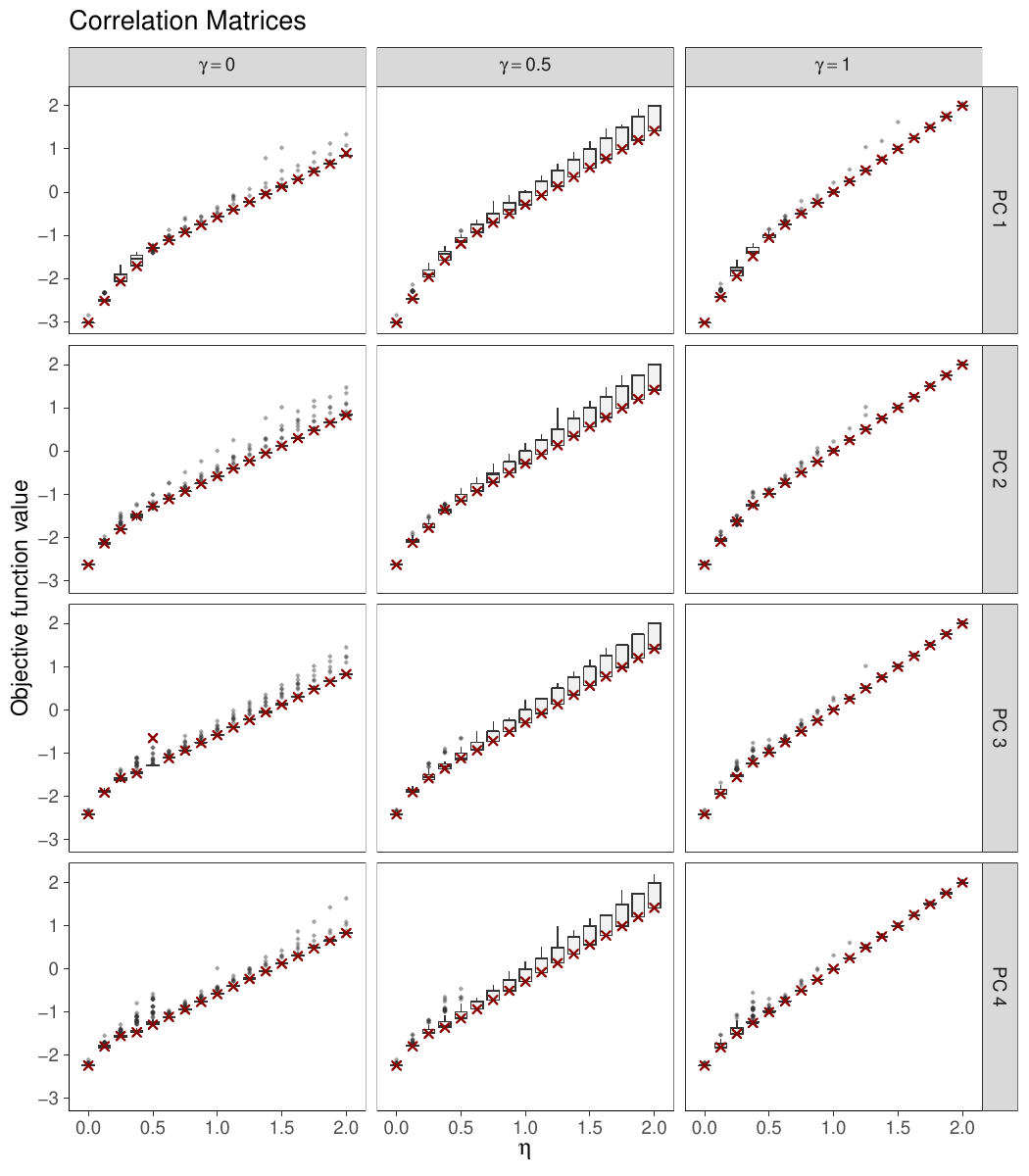}
    \caption{Objective function values for the proposed starting value (cross) and for random starting values (boxplot) for correlation matrices and $\gamma = 0.5$.}
    \label{fig:starts_cor}
\end{figure}

\pagebreak
\section{Weather Analysis at Hohe Warte}
\label{sec:app_weathervars}

The variables used in the real data example collected from the weather station Hohe Warte are listed and described in Table~\ref{tab:weather_vars}.

\begin{table}[H]
    \centering
    \begin{tabular}{|c|c|c|c|}
        \hline
        \textbf{Name} & \textbf{Description} & \textbf{} & \textbf{Unit} \\
        \hline
        cl & Cloud Coverage & Daily Mean & $1, \ldots, 100$ \\
        \hline
        rad & Global Radiation & Daily Sum & J/cm² \\
        \hline
        vp & Vapour Pressure & Daily Mean & hPa \\
        \hline
        wmax & Maximal Wind Speed & Daily Maximum & m/s \\
        \hline
        ap & Air Pressure & Daily Mean & hPa \\
        \hline
        hum & Relative Air Humidity & Daily Mean & \% \\
        \hline
        prec & Precipitation 24h & Daily Sum & mm \\
        \hline
        sight & Sight Distance & Sight at 1pm & m \\
        \hline
        sun & Sunshine Duration & Daily Sum & h \\
        \hline
        tmax & Maximal Air Temperature at 2m & Daily Maximum & °C \\
        \hline
        tmin & Minimal Air Temperature at 2m & Daily Minimum & °C \\
        \hline
        t & Air Temperature at 2m & Daily Mean & °C \\
        \hline
        w & Wind Speed & Daily Mean & m/s \\
        \hline
    \end{tabular}
    \caption{Variables of weather data example.}
    \label{tab:weather_vars}
\end{table}

\section{Geochemical Plant Analysis}
\label{sec:app_plants}

\begin{figure}[H]
    \centering
    \includegraphics[width=\textwidth]{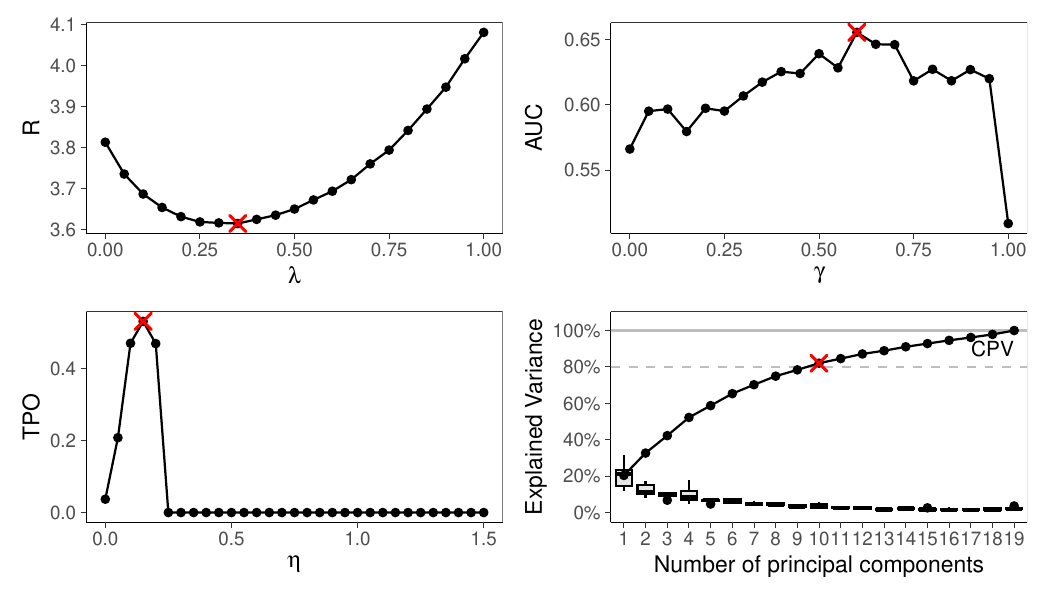}
    \caption{Plant geochemistry: Optimal smoothing and sparsity parameters.}
    \label{fig:plants_optpars}
\end{figure}

\end{document}